\numberwithin{equation}{section}
\theoremstyle{plain}
\theoremstyle{definition}
\newtheorem{remark}{Remark}[section]
\newtheorem{definition}{Definition}[section]
\newtheorem{proposition}{Proposition}[section]
\newtheorem{example}{Example}[section]
\newcommand*{\rom}[1]{\expandafter\@slowromancap\romannumeral #1@}
\g@addto@macro\bfseries{\boldmath}
\tikzset{->-/.style={decoration={
  markings,
  mark=at position .6 with {\arrow{>}}},postaction={decorate}}}
\tikzset{-<-/.style={decoration={
  markings,
  mark=at position .6 with {\arrow{<}}},postaction={decorate}}}
\tikzset{%
    add/.style args={#1 and #2}{
        to path={%
 ($(\tikztostart)!-#1!(\tikztotarget)$)--($(\tikztotarget)!-#2!(\tikztostart)$)%
  \tikztonodes},add/.default={.2 and .2}}
}  
\tikzset{
    extended line/.style={shorten >=-#1,shorten <=-#1},
    extended line/.default=1cm]
}
\tikzset{line through/.style args={#1 parallel to line through #2 and #3 and
length #4}{insert path={%
let \p1=($(#3)-(#2)$),\n1={atan2(\y1,\x1)} in (#1) -- ++ (\n1:#4)}}}
\definecolor{col1}{rgb}{0.4, 0.69, 0.2}
\definecolor{col2}{rgb}{0.96, 0.29, 0.54}
\definecolor{green(ryb)}{rgb}{0.4, 0.69, 0.2}
\definecolor{frenchrose}{rgb}{0.96, 0.29, 0.54}
\definecolor{persianblue}{rgb}{0.11, 0.22, 0.73}
\definecolor{jade}{rgb}{0.0, 0.66, 0.42}
\definecolor{limegreen}{rgb}{0.2, 0.8, 0.2}
\title{Generalisation of Bureau-Guillot systems with \\ Painlev\'e transcendents in the coefficients }
\author{Marta Dell'Atti{\color{jade}$\,^{1,*}$} \\[-.5ex] \small{\url{m.dell-atti@uw.edu.pl}} \\[1ex] Galina Filipuk{\color{jade}$\,^1$} \\[-.5ex]  \small{\url{g.filipuk@uw.edu.pl}} \\[1ex]
{\color{jade}$\,^1$}{\small University of Warsaw, Institute of Mathematics, 
 Banacha 2, Warsaw 02-097, Poland}
\\[.2ex]
{\color{jade}$\,^*$}{\small Corresponding author}
}
\date{}
\begin{document}

\maketitle

\begin{abstract}
 We construct a generalisation of what we call Bureau-Guillot systems, i.e.\ systems of first order equations with coefficient functions being Painlev\'e transcendents. The same Painlev\'e equation is related to the system and it appears as   regularising  condition in the regularisation process. The systems considered are birationally equivalent to the Okamoto polynomial Hamiltonian systems with rational coefficients for Painlev\'e equations,  hence they possess the Painlev\'e property. This work extends the results of Bureau-Guillot in a two-fold way. On one side, we consider polynomial systems with degree larger than $2$ that are free of movable critical points. These systems contain not only transcendents $\text{P}_{\text{I}}$ and $\text{P}_{\text{II}}$ in the coefficients, but also transcendents~$\text{P}_{\text{III}}$, $\text{P}_{\text{IV}}$, $\text{P}_{\text{V}}$ and $\text{P}_{\text{VI}}$ (and/or their derivatives). On the other side, we explore examples of rational systems   with the Painlev\'e transcendents in the coefficients birationally equivalent to the Okamoto polynomial systems. 
 Lastly, we present a simpler version of the change of variables to obtain  the analogues of the  Bureau-Guillot systems. In this framework we discuss generalisations including the mixed case:   systems related to the equation $(\text{P}_{\text{J}})$, with ${\text{J}=\text{I}, \dots, \text{VI}}$, containing  coefficient functions that are solutions to $(\text{P}_{\text{K}})$ with~$\text{K}\neq \text{J}$.  In the latter, the equation $(\text{P}_{\text{K}})$ appears as a regularising condition during  the regularisation process.   Although we are primarily interested in systems possessing the Painlev\'e property, we also briefly discuss an  analogous construction for systems including coefficient functions solving quasi-Painlev\'e equation.

\end{abstract}

{\bf Key words:} Painlev\'e equations, Painlev\'e property, regularisation, birational transformations, Hamiltonian systems.

{\bf MSC 2020:} 34M55

\vspace{2ex}

\section{Introduction}

     Painlev\'e equations are nonlinear differential equations of second order of the form 
    \begin{equation}\label{eq:rational_sec}
        y''=R\big(y,y';x\big)\,, \qquad y = y(x) \in \mathbb{C}, \quad x \in \mathbb{C}\,, 
    \end{equation}
    with $R$ being polynomial in $y'$, rational both in $y$ and $x$. They were studied by the Painlev\'e school~(\cite{Painleve1900,Gam})
    with the purpose of classifying the equations of the form~\eqref{eq:rational_sec} with what is now known as the Painlev\'e property: their solutions are free from movable critical points. The nonlinear nature of the equations induces the appearance of movable singularities, depending on the initial conditions, but for the Painlev\'e property these can be at most poles. The classification leads to the identification of  fifty standard forms of equations, among which  solutions of six equations are not expressible in terms of already known (classical) functions. These six equations are known as Painlev\'e equations, which we will label as $(\text{P}_{\text{J}})$ with $\text{J}= \text{I}, \dots, \text{VI}$.  We will introduce them later on. Their solutions, the Painlev\'e transcendents, are instead labelled in this paper as $\text{P}_{\text{J}}$ with $\text{J}= \text{I}, \dots, \text{VI}$. 

    {
    The Painlev\'e equations $(\text{P}_{\text{J}})$ with $\text{J}= \text{I}, \dots, \text{VI}$, admit a (non-unique) representation as non-autonomous Hamiltonian systems of two first order differential equations, as studied in the classical works by Okamoto~\cite{okamoto1,okamoto2,okamoto3,okamoto4}. 
    We will consider systems of the type \begin{equation}\label{eq:original}
    \begin{cases}
        y'= P(y,z;x), \\
        z'= Q(y,z;x),
    \end{cases} \qquad (y,z) \in \mathbb{C}^2\,, ~~x \in \mathbb{C}\,, 
\end{equation}
with $y=y(x)$, $z=z(x)$ and $P$ and $Q$ polynomial (or rational) functions, that will be regularised according to the algebro-geometric Okamoto-Sakai method (see~\cite{Okamoto1979,Sakai2001}) that we will briefly recall, to construct the so-called space of initial conditions. First, to include the points at infinity, we need to consider a compactification of the complex plane $\mathbb{C}^2$, this being throughout the paper $\mathbb{P}^2$.   
    The compactified space $\mathbb{P}^2$ is given by gluing together three affine charts identified by $(y,z)$, $(u_0,v_0)$ and $(U_0,V_0)$, related by homogeneous coordinates $[w_0:w_1:w_2]$, such that 
\vspace*{-1ex}
\begin{equation}\label{eq:CP2}
\includegraphics[width=.23\textwidth,valign=c]{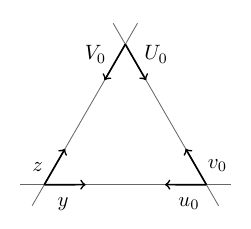} \qquad   
    \begin{aligned}
    &\mathbb{P}^2 = \mathbb{A}_{(y,z)} \cup  \mathbb{A}_{(u_0,v_0)} \cup \mathbb{A}_{(U_0,V_0)}\,, \\[1.4ex] 
    &[\,1:y:z\,] = [\,u_0:1:v_0\,] = [\,V_0:U_0:1\,]\,, \\[.7ex]
    &u_0 = \frac{1}{y}\,, \qquad V_0 = \frac{1}{z}\,, \qquad v_0 = \frac{z}{y} = \frac{1}{U_0} \,.
\end{aligned}
\end{equation} 

\vspace*{-1ex}

\noindent
In each affine chart, the system is represented as a pair of rational differential equations in the corresponding variables: 
\begin{equation}\label{eq:P1P1}
    \begin{cases}
        (u_0)'= P_1(u_0,v_0;x), \\[1ex]
        (v_0)'= Q_1(u_0,v_0;x),
    \end{cases} \qquad \begin{cases}
        (U_0)'= P_2(U_0,V_0;x), \\[1ex]
        (V_0)'= Q_2(U_0,V_0;x),
    \end{cases} 
\end{equation}
where the functions $P_k, Q_k$ for $k=1,2$ are in general rational functions. Analysing the systems in the three charts composing $\mathbb{P}^2$, we identify the so-called 
base points by looking for (finitely many) points of indeterminacy labelled as
$b_j$ ($j=1,\dots,N$), for which the right hand side of the rational differential equations becomes~$0/0$. These are points of coalescence of infinitely many integral curves, and to resolve the indeterminacy,  we introduce a blow-up transformation of the surface $\mathbb{P}^2$, such that the point is substituted by a complex projective line $\mathbb{P}^1$. In this way, each of the previously coalescing curves intersects the line $\mathbb{P}^1$ in one point only, as depicted heuristically here:
\begin{equation*}
    \includegraphics[width=0.45\linewidth]{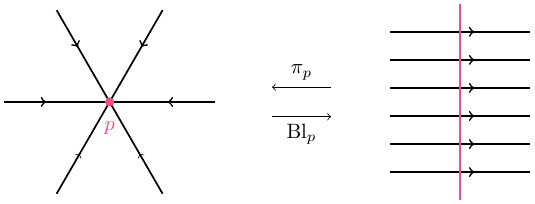}
\end{equation*}
The blow-up transformation analytically corresponds to a birational transformation between affine charts. In particular, the system $(x,y)$ blown up at the point $p$ with coordinates $p \colon (x,y)=(a,b)$ is such that the point is replaced by the projective line $\mathbb{P}^1$, completely described by two charts, say $(u,v)$ and $(U,V)$. The birational transformation is described via the changes of coordinates 
\begin{equation}\label{eq:blowup}
    \begin{cases}
        u=y-a\,, \\[1.2ex]
        v=\dfrac{z-b}{y-a} \,,
    \end{cases}  \hspace{6ex} \begin{cases}
        U=\dfrac{y-a}{z-b}\,, \\[1.2ex]
        V=z-b\,. 
    \end{cases} 
\end{equation}
For our scope, it is also worth considering the inverse of this transformation, i.e. 
\begin{equation}\label{eq:blowup_inverse}
    \begin{cases}
        y= u + a = UV + a\,, \\[1ex]
        z= uv +b = V + b\,.
    \end{cases}
\end{equation}
The line isomorphic to $\mathbb{P}^1$ emerging from this transformation is referred to as the exceptional curve~$E$ described by the sets
\begin{equation}\label{eq:exc_curve}
    E = \{u = 0\} \cup \{ V = 0 \}\,.  
\end{equation}
The system of differential equations in which the indeterminacy point is visible is then transformed in two charts, giving rise to two rational differential systems in the coordinates $(u,v)$ and $(U,V)$ respectively. These systems may have new points of indeterminacy, at which we apply again the blow-up transformation. The finite sequence of these transformations at points $b_i^{(j)}$ gives rise to the so-called cascade of points. 
We use the upper index $(j)$ to indicate the $j$-th cascade emerging from the point $b_i$ in the original system. 
In the $j$-th cascade the convention for the change of variables (omitting the upper index) $(u_i,v_i) \mapsto (u_{i+1},v_{i+1}) \cup (U_{i+1},V_{i+1})$ to perform the blow-up in the chart $(u_i,v_i)$ at the point with oordinates $(u_{i},v_i)=(a_i,b_i)$ is  
\begin{equation}
    \begin{cases}
        u_{i+1} = u_{i}-a_i, \\[1ex]
        v_{i+1} = \dfrac{v_i - b_i}{u_i - a_i},
    \end{cases} \qquad \begin{cases}
        U_{i+1} = \dfrac{u_{i} - a_i}{v_i - b_i}, \\[1.5ex]
        V_{i+1} = v_i + b_i. 
    \end{cases}
\end{equation}
Every time a blow-up is performed, an additional exceptional curve emerges, represented as the set $E_{i+1}=\{u_{i+1}=0\} \cup \{V_{i+1}=0\}$, adapting the formulation in~\eqref{eq:exc_curve}. This procedure is iterated until the systems in both final charts $(u_{\text{fin}},v_{\text{fin}})$ and $(U_{\text{fin}},V_{\text{fin}})$ are not affected by further indeterminacies. At the end of the procedure the system is said regularised if it is regular on the last exceptional curve in one of the final charts in every cascade. Hence, the regularisation process gives rise to a chain of finitely many consecutive birational transformations of variables, e.g.\ 
\begin{equation}\label{eq:chain}
    (y,z) ~\mapsto~ (u_1,v_1) ~\mapsto~ \dots~ \mapsto ~(U_k,V_k)~ \mapsto ~ \dots~ \mapsto ~(u_{\text{fin}},v_{\text{fin}}) \,. 
\end{equation}
The regularising conditions arising in the regularisation process, i.e.\ (differential) constraints on the coefficient functions appearing in the systems which make the system in one of the final charts regular   on the last exceptional divisors, will be of crucial importance throughout the paper\footnote{
Finding the regularisation conditions 
is equivalent to fixing the resonance conditions on the coefficients in the series expansion of solutions in the (quasi)-Painlev\'e test (see~\cite{GF_AS1,GF_AS2,TKGF,MDTK1,MDTK2}).}. The space of initial conditions is then given by removing the specific configuration of irreducible inaccessible divisors  from~$\mathbb{P}^2$ (or vertical leaves in the Okamoto's language~\cite{Okamoto1979}) forming the so-called surface types represented as the extended Dynkin diagrams $ADE$~\cite{Takano97,Takano99}.

In this paper, we will consider systems of the type~\eqref{eq:original}, possessing meromorphic coefficients in $x$ and endowed with the Painlev\'e property. We will start by analysing the cases originally studied by Bureau~\cite{Bureau} and revised by Guillot~\cite{Guillot} related to the first and the second Painlev\'e equations, where the coefficient functions are in turn solutions to ($\text{P}_{\text{I}}$) and ($\text{P}_{\text{II}}$) respectively, in the framework of quadratic systems free from critical points. In~\cite{Guillot} some of the birational relations connecting the systems pairwise are identified, and in~\cite{MDGF} the birational equivalence of the systems is contextualised within the Okamoto-Sakai approach. Then, we will proceed with an extension of the above mentioned systems, and finally determine a generalisation of the procedure to include the other Painlev\'e equations for polynomial  systems of degree greater than two. 

Here, we formalise the definition of what we mean by the analogue of the  Bureau-Guillot system. 
\begin{definition}\label{def:BG}
    A system as in~\eqref{eq:original} is an analogue of the Bureau-Guillot system if it has the following properties: 
    \begin{enumerate}[label=($\roman*$)]
        \item it is birationally equivalent to the Hamiltonian system with coefficients rational in $x$ associated with the
        Painlev\'e equation $(\text{P}_{\text{J}})$;
        \item it has coefficient functions both rational and  meromorphic in $x$ which are functions of the Painlev\'e transcendent $\text{P}_{\text{J}}$ and its derivatives;
        \item we obtain the Painlev\'e equation $(\text{P}_{\text{J}})$ as a regularising 
        condition in the regularisation process of the system. 
    \end{enumerate}
\end{definition}
\noindent 
Because of $(i)$ the system has the Painlev\'e property automatically. 

The Bureau-Guillot systems and their generalizations are important as they extend beyond equations with rational coefficients and movable singularities. While in this paper we consider specific problems of constructing the Bureau-Gulliot systems in the sense of the definition above, the general classification of differential equations with polynomial/rational right-hand sides that have the Painlev\'e property is far more complicated. The case of rational coefficients alone is challenging, and assuming meromorphic coefficients quickly  increases the complexity. 

    }

{
In the following, in Section~\ref{sec:BGgen} we will revisit the well-known quadratic Bureau-Guillot systems according to Definition~\ref{def:BG} associated with the Painlev\'e equations $(\text{P}_{\text{I}})$ and $(\text{P}_{\text{II}})$. We identify the mechanism to produce such systems and propose a novel generalisation of these systems. In Section~\ref{sec:analogueBG}, we construct the analogue of Bureau-Guillot systems for the polynomial systems associated with Painlev\'e equations~$(\text{P}_{\text{III}})$ (of types $D_6$, $D_7$ and   $D_8$), $(\text{P}_{\text{IV}})$, $(\text{P}_{\text{V}})$ and $(\text{P}_{\text{VI}})$. Lastly, in Section~\ref{sec:gen_mixedcases}, we provide an alternative construction to include a generalised class of systems. In this framework, we present examples of mixed cases: we construct systems related to $(\text{P}_{\text{J}})$ for $\text{J} = \text{I}, \dots, \text{VI}$, whose coefficient functions are solutions of~$(\text{P}_{\text{K}})$, with $\text{K} \neq \text{J}$, this appearing in the regularising conditions. The latter construction can be extended to the case of quasi-Painlev\'e equations as well. 
}

\section{\texorpdfstring{Bureau-Guillot systems and their generalizations}{BGgen}}\label{sec:BGgen}
{ 
In this Section, we deal with the  systems studied by Bureau and Guillot with the nomenclature introduced in~\cite{Bureau} associated with $(\text{P}_{\text{I}})$ and $(\text{P}_{\text{II}})$. We identify the type of birational transformations connecting the systems pairwise, and via a generalisation of this transformation we determine an extended (or parametrised) version of the above mentioned Bureau-Guillot systems. Moreover, we show that this procedure can be iterated. 

}
\subsection{\texorpdfstring{Systems V and (parametrised) IX.B(2) for $(\text{P}_{\text{I}})$}{P1}}
The system V~\cite{Bureau,Guillot} in coordinates $(y_5,z_5)$, with $y_5=y_5(x)$ and $z_5=z_5(x)$, is 
\begin{equation}\label{eq:system_5}
\begin{cases}
    y_5' = z_5,\\[1ex]
z_5' = 6\,y_5^2 + f(x),
\end{cases} \qquad f''(x) = 0\,.
\end{equation}
As shown in~\cite{MDGF}, the condition on the arbitrary function $f''=0$ can be recovered from the regularisation procedure: it coincides with the regularising condition for which the system V satisfies the Painlev\'e property. In particular, fixing the integration constants such that $f(x)=x$, the variable $y_5$ satisfies the equation $(\text{P}_{\text{I}})$
\begin{equation}\label{eq:P1}
    (\text{P}_{\text{I}})\colon ~~ y_5''= 6\,y_5^2 + x \,. 
\end{equation}

The system IX.B(2) in coordinates $(y_{92},z_{92})$,  with $y_{92}=y_{92}(x)$ and $z_{92}=z_{92}(x)$ is 
\begin{equation}\label{eq:system_92}
\begin{cases}
    y'_{92} =  z_{92} - y_{92}^2 +12 q(x)\,, \\[1ex]
    z_{92}' = y_{92}\,z_{92} ,
\end{cases}
    \qquad (q''-6q^2)''=0\,. 
\end{equation}
The  additional condition on $q(x)$ can be interpreted again as the regularising  condition for the system IX.B(2) to have the Painlev\'e property, and it appears at the end of the regularisation process for the system. 

As already noticed in~\cite{Guillot}, and contextualised within the Okamoto-Sakai approach in~\cite{MDGF}, the systems~V and IX.B(2) are both associated with the surface type $E_8^{(1)}$, and related by the following birational change of coordinates $(y_5,z_5) \mapsto (y_{92},z_{92})$: 
\begin{equation}\label{eq:birational_syst92_syst5}
z_{92}=6(y_5-q)\,, \qquad  y_{92}=\frac{z_5-q'}{y_5-q}\,. 
\end{equation}
If we look at the definition of the blow-up transformation~\eqref{eq:blowup}, we can interpret the birational transformation~\eqref{eq:birational_syst92_syst5} as   a sort of parametrised (or re-scaled) blow-up transformation from the system~\eqref{eq:system_5} in coordinates~$(y_5,z_5)$ at the point 
\begin{equation}\label{eq:point_syst_5}
p_{5}\colon (y_5,z_5)= \big(q,q'\big)\,.     
\end{equation}
What distinguishes the transformation of the type~\eqref{eq:birational_syst92_syst5} from a standard blow-up transformation is the presence of a re-scaling term in the expression of $z_{92}$. Moreover, $q$ is itself solution to $(\text{P}_{\text{I}})$ and the point~$p_5$ in~\eqref{eq:point_syst_5} is a point on the orbit of the system V in~\eqref{eq:system_5}. The blow-up is performed as in~\eqref{eq:blowup_inverse} by considering the coordinates of two charts, that we call $(u,v)$, and $(U,V)$
\begin{equation}\label{eq:blowup_syst5}
\begin{cases}
    y_5 = \dfrac{1}{6}\,u+q   = \dfrac{1}{6}\,UV + q \,,  \\[2ex]
    z_5 =  \dfrac{1}{6}\,uv+q'  = \dfrac{1}{6}\,V + q' \,.
\end{cases}
\end{equation}
Lastly, from the new system in the chart $(U,V)$ we perform a further final change of variables to obtain the system IX.B(2), i.e.
\begin{equation}\label{eq:inversion}
(U\,,V) \mapsto \big( y_{92} = U^{-1}\,, z_{92} = UV \big)   \,. 
\end{equation}
In coordinates $(y_{92},z_{92})$ the system has the form reported in~\eqref{eq:system_92}, and we find that $q$ is a $\text{P}_{\text{I}}$ transcendent as a regularising condition for the system. 

We generalise the blow-up-like transformation~\eqref{eq:blowup_syst5} with focus on the chart $(U,V)$, introducing two arbitrary constants $a_1, a_2 \in \mathbb{C}$ obtaining a modified version of the system IX.B(2), as in the following.

\begin{proposition}
The modified system IX.B(2) depending on two parameters $a_1,a_2 \in \mathbb{C}$ is 
\begin{equation}\label{eq:system_92_mod}
    \begin{cases}
    (y_{92\text{m}})' = -\dfrac{a_2}{a_1}(y_{92\text{m}})^2 + 6\,\dfrac{a_1}{a_2}\big(a_1\,z_{92\text{m}}+2q\big)   \,,  \\[2ex]
    (z_{92\text{m}})' = \dfrac{a_2}{a_1}\,y_{92\text{m}} z_{92\text{m}}   \,,
\end{cases}
\end{equation}
with the function $q$ being a $\text{P}_{\text{I}}$ transcendent. 
\end{proposition}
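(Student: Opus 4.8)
The plan is to realise the modified system as the image of system~V under an explicit generalisation of the blow-up~\eqref{eq:blowup_syst5}. First I would replace the two factors $\tfrac16$ occurring in the chart $(U,V)$ of~\eqref{eq:blowup_syst5} by the parameters $a_1$ and $a_2$, namely
\begin{equation*}
    y_5 = a_1\,UV + q\,, \qquad z_5 = a_2\,V + q'\,,
\end{equation*}
and then compose with the inversion~\eqref{eq:inversion}, $y_{92\text{m}} = U^{-1}$, $z_{92\text{m}} = UV$. Eliminating $U,V$ yields the direct birational map $(y_5,z_5)\mapsto(y_{92\text{m}},z_{92\text{m}})$,
\begin{equation*}
    z_{92\text{m}} = \frac{1}{a_1}\big(y_5-q\big)\,, \qquad y_{92\text{m}} = \frac{a_1}{a_2}\,\frac{z_5-q'}{y_5-q}\,,
\end{equation*}
which reduces to~\eqref{eq:birational_syst92_syst5} when $a_1=a_2=\tfrac16$ and is the analogue of the parametrised blow-up at the point $p_5$ of~\eqref{eq:point_syst_5}. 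For later use I would also record the inverse relations $y_5-q = a_1\,z_{92\text{m}}$ and $z_5-q' = a_2\,y_{92\text{m}}z_{92\text{m}}$.

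Next I would differentiate the two defining expressions along the flow of system~V, substituting $y_5'=z_5$ and $z_5'=6y_5^2+x$ from~\eqref{eq:system_5} with $f(x)=x$. The equation for $z_{92\text{m}}$ is immediate: $z_{92\text{m}}' = \tfrac1{a_1}(z_5-q') = \tfrac{a_2}{a_1}\,y_{92\text{m}}z_{92\text{m}}$ after inserting the inverse relation, which is already the second line of~\eqref{eq:system_92_mod}. For $y_{92\text{m}}$ the quotient rule produces a single fraction whose numerator is $(z_5'-q'')(y_5-q)-(z_5-q')^2$.

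The crux of the argument — and the only place where the hypothesis on $q$ enters — is the simplification of $z_5'-q''$. Since $q$ is a $\text{P}_{\text{I}}$ transcendent, $q''=6q^2+x$, so the explicit $x$ cancels and $z_5'-q''=6(y_5^2-q^2)=6(y_5-q)(y_5+q)$. The numerator then factors as $6(y_5-q)^2(y_5+q)-(z_5-q')^2$, giving
\begin{equation*}
    y_{92\text{m}}' = \frac{a_1}{a_2}\left[\,6\big(y_5+q\big) - \frac{(z_5-q')^2}{(y_5-q)^2}\,\right]\,.
\end{equation*}
Writing $y_5+q = (y_5-q)+2q = a_1\,z_{92\text{m}}+2q$ and $\tfrac{z_5-q'}{y_5-q}=\tfrac{a_2}{a_1}\,y_{92\text{m}}$ and collecting terms reproduces exactly the first line of~\eqref{eq:system_92_mod}. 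I expect the algebra to be routine; the genuine content is that without $q$ solving $(\text{P}_{\text{I}})$ the residual $x$-dependence in $z_5'-q''$ would survive, and the right-hand side would fail to be a polynomial vector field in $(y_{92\text{m}},z_{92\text{m}})$. Thus the $\text{P}_{\text{I}}$ condition on $q$ is precisely what renders the transformed system autonomous in the new coordinates and establishes the stated form.
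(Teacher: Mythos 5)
Your computation is correct, and it reaches the stated system by a genuinely different route from the paper. You push system V forward through the explicit birational map $(y_5,z_5)\mapsto(y_{92\text{m}},z_{92\text{m}})$ and observe that $q''=6q^2+x$ is exactly what cancels the residual term $\big(6q^2+x-q''\big)/(a_2\,z_{92\text{m}})$ in the quotient-rule computation, so that $q$ solving $(\text{P}_{\text{I}})$ is necessary and sufficient for the right-hand side to be the stated polynomial vector field. The paper instead states the transformation briefly and devotes the proof to the Okamoto--Sakai regularisation of the resulting system with $q$ treated as an arbitrary coefficient function: two base points in $\mathbb{P}^2$, cascades of blow-ups (one of length eight), and the regularising condition~\eqref{eq:condP1}, which integrates twice to~\eqref{eq:cond_p1}. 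Your argument is shorter, pinpoints precisely where the hypothesis on $q$ enters, and fixes $(\text{P}_{\text{I}})$ on the nose (with the normalisation $f(x)=x$) rather than only up to two integration constants. What it does not deliver is property $(iii)$ of Definition~\ref{def:BG}: that $(\text{P}_{\text{I}})$ appears as a \emph{regularising condition} in the blow-up regularisation of the system~\eqref{eq:system_92_mod} itself, viewed with arbitrary $q$. That is the geometric content the paper's cascade analysis establishes, and it is also what reveals the elliptic degeneration $c_1=0$ discussed in the Remark following the proposition; so if the proposition is read---as the paper intends---as producing a Bureau--Guillot system in the sense of Definition~\ref{def:BG}, your proof covers the construction but the regularisation half still has to be carried out. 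One terminological slip: the $(\text{P}_{\text{I}})$ condition does not make the transformed system \emph{autonomous} (it still depends on $x$ through the coefficient $q(x)$); what it buys is polynomiality, i.e.\ the removal of the pole along $z_{92\text{m}}=0$.
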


\begin{proof}
We perform a generalised version of the blow-up from the chart $(y_5,z_5)$ at the point $(q,q')$, obtaining the two charts $(u,v),(U,V)$. We select the chart $(U,V)$, i.e.
\begin{equation}\label{eq:blowup_syst5_gen}
\begin{cases}
    y_5 =  a_1\,UV + q \,,  \\[2ex]
    z_5 =  a_2\,V + q' \,,
\end{cases}
\end{equation}
and perform a further change of variables mimicking~\eqref{eq:inversion}, i.e.\ $(U,V)\mapsto(y_{92\text{m}}=U^{-1},z_{92\text{m}}=UV)$. This leads to the system in $(y_{92\text{m}},z_{92\text{m}})$ in~\eqref{eq:system_92_mod}, that undergoes the regularisation process that here we show. Starting from the identification of the base points in~$\mathbb{P}^2$: 
\begin{equation}
    b^{(1)}_{92\text{m}}\colon (u_0,v_0) = (0\,,0)\,, \qquad b^{(2)}_{92\text{m}}\colon (U_0,V_0) = (0\,,0)\,,
\end{equation}
from the base point $b^{(1)}_{92\text{m}}$, the chain of blow-ups to be performed is: 
\begin{equation}
    b^{(1)}_{92\text{m}} \colon (u_0,v_0) = (0\,,0)~\leftarrow ~ (u_1^{(1)},v_1^{(1)}) = (0\,,0)\,.
\end{equation}
The system in both charts $(u_2^{(1)},v_2^{(1)})$ and $(U_2^{(1)},V_2^{(1)})$ is regular, without imposing any additional condition. 
From the base point $b^{(2)}_{92\text{m}}$, the chain of transformations is: 
\begin{equation}
    \begin{aligned}
        &b^{(2)}_{92\text{m}}\colon (U_0,V_0) = (0\,,0) ~\leftarrow~ (u_1^{(2)},v_1^{(2)}) = (0\,,0) ~\leftarrow~ (u_2^{(2)},v_2^{(2)}) = \left(0\,,\frac{a_2^2}{4\,a_1^3}\right) ~\leftarrow~ \\[1ex]  
        &~\leftarrow~ (u_3^{(2)},v_3^{(2)}) = (0\,,0) ~\leftarrow~ (u_4^{(2)},v_4^{(2)}) = \left(0\,,-\frac{3\,a_2^4}{16\,a_1^7}\,q\right)~\leftarrow~ \\[1ex]
        &~\leftarrow~(u_5^{(2)},v_5^{(2)}) = \left(0\,,-\frac{a_2^5}{32\,a_1^9}\,q'\right) ~\leftarrow~(u_6^{(2)},v_6^{(2)}) = \left(0\,,-\frac{a_2^6}{128\,a_1^{11}}\,\big(q''- 36\,q^2 \big)\right) ~\leftarrow~ \\[1ex]
        &~\leftarrow~ (u_7^{(2)},v_7^{(2)}) = \left(0\,,\frac{a_2^7}{256\,a_1^{13}}\,\big( q'''-36\,q\,q' \big)\right).
    \end{aligned}
\end{equation}
The system in coordinates $\big(u_8^{(2)},v_8^{(2)}\big)$ and $\big(U_8^{(2)},V_8^{(2)}\big)$ is regular if the following condition is satisfied
\begin{equation}\label{eq:condP1}
    q^{(IV)} - 12\big( (q')^2 +q\,q'' \big) =0 \,,
\end{equation}
that can be integrated twice in $x$, obtaining 
\begin{equation}\label{eq:cond_p1}
    q'' - 6 q^2= c_1\,x+c_0\,, 
\end{equation}
where $c_0,c_1 \in \mathbb{C}$ are integration constants. 
\end{proof} 

\begin{remark}
The expression in~\eqref{eq:cond_p1} includes the elliptic case ($c_1=0$) that we do not consider here, and the $\text{P}_{\text{I}}$ transcendent ($c_1 \neq 0$ after a suitable re-scaling).       
\end{remark}

The construction we propose can be iterated. For instance, we can start from the system IX.B(2)~\eqref{eq:system_92} and perform a blow-up from the chart $(y_{92},z_{92})$ at a point of the solution curve, i.e. 
\begin{equation}
    p_{92}\colon (y_{92},z_{92}) = \big( q_1 ,  q_1^2-12 q+q_1' \big) \,, 
\end{equation}
where $q_1$ is solution to the second order differential equation for $y_{92}$ given by 
\begin{equation}\label{eq:second_order_y92}
    y_{92}'' =  12\, q'-12 q y_{92}-y_{92} y_{92}'+y_{92}^3\,. 
\end{equation}
We consider the transformations 
\begin{equation}
    \begin{cases}
    u=y_{92} - q_1\,, \\[2ex] 
    v=\dfrac{z_{92}-(q_1'+q_1-12q^2)}{y_{92}-q_1}\,,    
    \end{cases}
     \quad \begin{cases} 
    U= \dfrac{y_{92} - q_1}{z_{92} - (q_1'+q_1-12q^2)}  \,, \\[2ex] 
    V = z_{92} - (q_1'+q_1-12q^2)\,, \end{cases}
\end{equation}
and selecting the chart $(U,V)$, the further change of variables is 
\begin{equation}\label{change}
    (U,V) \mapsto \big( y_{92}^{(1)}=U^{-1} \,, z_{92}^{(1)} = UV \big)\,. 
\end{equation}
The iteration of the procedure results in the system in $\big(y_{92}^{(1)},z_{92}^{(1)}\big)$ 
\begin{equation}\label{eq:iterated_bureau}
    \begin{cases}
        \big(y_{92}^{(1)}\big)'=
        3 y_{92}^{(1)} q_1+q_1'+q_1{}^2-12\, q+2 y_{92}^{(1)} z_{92}^{(1)}-\big(y_{92}^{(1)}\big)^2
        , \\[1ex]
        \big(z_{92}^{(1)}\big)'=
       z_{92}^{(1)} \left( y_{92}^{(1)}-z_{92}^{(1)}-2\,  q_1\right)
\,,
    \end{cases}
\end{equation}
where $q(x)$ is a $\text{P}_{\text{I}}$ transcendent, and $q_1(x)$ is a solution of~\eqref{eq:second_order_y92}. This system can be considered as a  generalised version of the Bureau-Guillot system. 
Indeed, it is birationally equivalent to the Hamiltonian system V related to $(\text{P}_{\text{I}})$ via the transformation 
\begin{equation}
     y_{92}^{(1)}=\frac{(y_{5}-q) \left(11 q-q_{1}'-q_{1}^2+y_{5}\right)}{z_{5}-q'+q_{1} (q-y_{5})} \,, \qquad 
     z_{92}^{(1)}= \frac{z_{5}-q'+q_{1} (q-y_{5})}{y_{5}-q} \,,
\end{equation}
given by the composition of four birational transformations. In addition, the system~\eqref{eq:iterated_bureau} presents meromorphic coefficients which are solutions to $(\text{P}_{\text{I}})$ and the system IX.B(2) and their derivatives. Finally, the regularising conditions are equations with polynomial functions of $q(x)$, $q_1(x)$ and their derivatives 
\begin{equation}
    F\big(q,q',q'',q_1, q_1', q_1'';x\big) = 0\,, 
\end{equation}
which are identically satisfied once we assume that $q(x)$ is the $\text{P}_{\text{I}}$ transcendent, and $q_1(x)$ solves the second order equation for the system IX.B(2), i.e.~\eqref{eq:second_order_y92}.
Note that the similar regularising  conditions hold if we omit the change of variables (\ref{change}). 
Thus, we conjecture that we can further iterate the method applying it to system \eqref{eq:iterated_bureau} and so on, to obtain the whole orbit of the Bureau-Guillot systems.

We end this Section with the following remarks for the system with arbitrary coefficients and a system of quasi-Painlev\'e type\footnote{A quasi-Painlev\'e system has the so-called quasi-Painlev\'e property \cite{Shimomura2006,Shimomura2008}: the admissible movable singularities are algebraic branch points rather than poles. }. 
\begin{remark}
We consider the system V in~\eqref{eq:system_5} as written with the arbitrary coefficient function $f(x)$ 
\begin{equation}\label{eq:systV_arbi_coeff}
    \begin{cases}
    y_5' = z_5\,,\\[1ex]
z_5' = 6\,y_5^2 + f(x)\,, 
\end{cases}
\end{equation}
and we perform again the blow-up at the point $p_5\colon (y_5,z_5)=(q,q')$ as in~\eqref{eq:point_syst_5}. Therefore, we consider the transformation $(y_5,z_5) \mapsto (U,V)$ as in~\eqref{eq:blowup_syst5} with $a_1=a_2=1$, and the additional change of variables~$(U,V) \mapsto (U^{-1},UV)$. We obtain the modified version of the system IX.B(2) as in~\eqref{eq:system_92_mod}, with~$a_1=a_2=1$. In this case, we find the same condition as in~\eqref{eq:condP1},  which can be integrated twice and gives us that $q$ satisfies $(\text{P}_{\text{I}})$. As a consequence, since $q(x)$ is solution to the second order equation associated with the system~\eqref{eq:systV_arbi_coeff} as well, the arbitrary coefficient $f(x)$ is linear in $x$. 

\end{remark}

\begin{remark}\label{rmk:thomas}
We consider the equation for the quasi-Painlev\'e I equation~\cite{Shimomura2006}, i.e. 
\begin{equation}\label{eq:qsiP1}
(\text{qsi-P}_{\text{I}})\colon ~~     (y_1)'' =  k_1(y_1)^4 + x\,,
\end{equation}
and the corresponding system
\begin{equation}\label{eq:qsiP1_syst}
    \begin{cases}
    y_1' = z_1\,,\\[1ex]
z_1' = k_1\,y_1^4 + x\,,
\end{cases}
\end{equation}
with $k_1 \in \mathbb{C}$ a constant. 
By performing the blow-ups as described in this Section, we are not able to produce an analogue of the Bureau-Guillot system for this case. In particular, the quasi-Painlev\'e equation~\eqref{eq:qsiP1} does not appear as a regularising condition for the system, and probably some changes in the construction are needed for the procedure to work in this case. For instance, in Section~\ref{sec:mixedcases} we present an alternative simpler version of the change of variables, that allows us to produce a system where the quasi-Painlev\'e equation is indeed found as a regularising condition (Example~\ref{ex:quasiP1coeff}). 
\end{remark}

\subsection{\texorpdfstring{Systems IX.B(3) and (parametrised) XIII for $(\text{P}_{\text{II}})$}{P2}}

Here we consider the systems IX.B(3) and  XIII following the nomenclature in~\cite{Bureau,Guillot}. In the variables~$(y_{93},z_{93})$ the system IX.B(3) has the form 
\begin{equation}\label{eq:syst_93}
\begin{cases} 
y_{93}' = z_{93} -y_{93}^2 - \dfrac{f(x)}{2}, \\[2ex]
z_{93}' = 2\,y_{93}\,z_{93} + \alpha+\dfrac{1}{2},
\end{cases} \qquad f''(x) = 0\,, \quad \alpha\in \mathbb{C} \text{ is a  constant}\,.
\end{equation}
In~\cite{MDGF}, we show how the condition $f'' = 0$ can be obtained as the regularising condition for the system~IX.B(3) to have the Painlev\'e property and how the system is  related to $(\text{P}_{\text{II}})$ 
\begin{equation}\label{eq:P2}
    (\text{P}_{\text{II}})\colon ~y'' = 2y^3 +x\,y+\alpha\,,
\end{equation}
via the Okamoto-Sakai approach, resulting in the surface type $E_7^{(1)}$. 
The system XIII in the variables $(y_{13},z_{13})$ is 
\begin{equation}\label{eq:syst_13}
\begin{array}{c c}
\begin{cases} 
y_{13}' = \dfrac{y_{13}}{2}\,(2\,z_{13} - y_{13}) + 2\,p(x)\,y_{13} ,\\[3ex]
z_{13}' = \dfrac{z_{13}}{2}(3\,y_{13} - 2\,z_{13}) - 4\,p(x)\,z_{13} + 2\,p(x)^2 - 2\,p'(x) + f(x)\,,
\end{cases}   
\end{array}
\end{equation}
with the conditions on the coefficient functions $p(x)$ and $f(x)$ 
\begin{equation}
\begin{aligned}
    (p'' -  2\,p^3 - f\,p)' &= 0, \\[1ex]
      f'' &= 0\,. 
\end{aligned}
\end{equation} 
The system XIII is related to $(\text{P}_{\text{II}})$ if the coefficient function $p(x)$ is itself a $\text{P}_{\text{II}}$ transcendent. In~\cite{MDGF}, we determined the latter condition as the regularising condition for the system, and via the Okamoto-Sakai approach we established the birational transformation between the two systems.

    We find a modified version of the system XIII from blowing up the chart $(y_{13},z_{13})$ at the point of the orbit of the system IX.B(3), i.e.
    \begin{equation}\label{eq:p93_blowup}
        p_{93}\colon (y_{93},z_{93}) = \left( q, \, q' + q^2 +\frac{x}{2} \right)\,,
    \end{equation}
    introducing the constant parameters $a_1,a_2 \in \mathbb{C}$. 

    \begin{proposition}
    The modified system XIII  depending on $a_1,a_2 \in \mathbb{C}$ given by 
        \begin{equation}\label{eq:syst_13_mod}
    \begin{cases}
        (y_{13\text{m}})'=\dfrac{a_1}{a_2} \left(2q^2+2 q'+x\right)-\dfrac{a_2 }{a_1}\,(y_{13\text{m}})^2+3\, a_1 y_{13\text{m}} \,z_{13\text{m}}+4\, y_{13\text{m}} q\,,\\[2ex]
    (z_{13\text{m}})'=z_{13\text{m}} \left(\dfrac{a_2}{a_1}\,y_{13\text{m}}-a_1 z_{13\text{m}}-2 q\right),
        \end{cases}
    \end{equation}
    with the function $q$ being a $\text{P}_{\text{II}}$ transcendent is a Bureau-Guillot system in the sense of Definition~\ref{def:BG}. 
    \end{proposition}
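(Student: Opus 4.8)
The plan is to transplant to $(\text{P}_{\text{II}})$ the parametrised blow-up construction already used for the modified system IX.B(2), so that the argument splits into the three items of Definition~\ref{def:BG}. First I would perform the generalised (re-scaled) blow-up of system IX.B(3) in~\eqref{eq:syst_93}, taken with $f(x)=x$ so that $y_{93}$ solves $(\text{P}_{\text{II}})$, centred at the orbit point $p_{93}$ of~\eqref{eq:p93_blowup}; note that $p_{93}=(q,\,q'+q^2+x/2)$ lies on a solution curve of IX.B(3) precisely because $q$ is a $\text{P}_{\text{II}}$ transcendent. Keeping only the chart $(U,V)$ and inserting the parameters $a_1,a_2$ gives
\begin{equation*}
\begin{cases}
y_{93} = a_1\,UV + q\,, \\[1ex]
z_{93} = a_2\,V + \left(q' + q^2 + \dfrac{x}{2}\right)\,,
\end{cases}
\end{equation*}
which I then compose with the inversion $(U,V)\mapsto(y_{13\text{m}}=U^{-1},\,z_{13\text{m}}=UV)$ exactly as in~\eqref{eq:inversion}. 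Substituting, differentiating, and using $q''=2q^3+xq+\alpha$ to eliminate $q''$ should collapse the $\alpha$-terms and reproduce system~\eqref{eq:syst_13_mod}; this is a direct computation, which I would record and move past.

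Item $(i)$ is then automatic: the composite $(y_{93},z_{93})\mapsto(y_{13\text{m}},z_{13\text{m}})$ is birational (blow-up, rescaling and inversion each are), and IX.B(3) is birationally equivalent to the Hamiltonian system for $(\text{P}_{\text{II}})$ by~\cite{MDGF}; composing the two maps yields the required equivalence to the system with coefficients rational in $x$, and hence the Painlev\'e property. Item $(ii)$ is clear by inspection, since every coefficient of~\eqref{eq:syst_13_mod} is a polynomial in $q,q',x$ with coefficients rational in $a_1,a_2$, so the coefficient functions are meromorphic in $x$ and are expressed through the transcendent $q$ and its derivatives.

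The real content is item $(iii)$: I would run the Okamoto-Sakai regularisation on~\eqref{eq:syst_13_mod} as in the proof for the modified IX.B(2). First I locate the base points of~\eqref{eq:syst_13_mod} in the charts of~\eqref{eq:CP2}, then resolve each by a cascade of blow-ups following the introduction's convention, tracking the centres $(u_i^{(j)},v_i^{(j)})$ at every stage; by analogy with the $\text{P}_{\text{I}}$ cascade these centres will be rational in $q,q',q''$ and $x$, with powers of $a_1/a_2$ entering only as scalings. The regularising condition for regularity on the last exceptional curve will emerge as a differential polynomial in $q$ that I expect to take the form $(q''-2q^3-xq)'=0$, i.e.\ a once-differentiated $(\text{P}_{\text{II}})$, whose single integration constant is absorbed into $\alpha$ to give $(\text{P}_{\text{II}})$ itself.

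The main obstacle is precisely this cascade: it is longer than the $\text{P}_{\text{I}}$ case since~\eqref{eq:syst_13_mod} is quadratic with an inhomogeneous $x$-term, so the base point at infinity should produce a deep chain of blow-ups. The two delicate points are, first, carrying the $a_1/a_2$ scalings consistently through the inversion so that the terminal condition is genuinely independent of $a_1,a_2$, and second, recognising the final differential polynomial as a derivative of $(\text{P}_{\text{II}})$ rather than as some strictly higher-order consequence of it; as in~\eqref{eq:condP1}--\eqref{eq:cond_p1} I anticipate that what falls out directly is a differentiated form of $(\text{P}_{\text{II}})$ that integrates back to it, confirming that $q$ being a $\text{P}_{\text{II}}$ transcendent is exactly the regularising condition.
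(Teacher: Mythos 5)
Your proposal follows essentially the same route as the paper's proof: the parametrised blow-up of system IX.B(3) at $p_{93}$ in the chart $(U,V)$, composed with the inversion $(U,V)\mapsto(U^{-1},UV)$, followed by regularisation of~\eqref{eq:syst_13_mod} with $q$ arbitrary, whose terminal condition $q'''=6q^2q'+q+xq'$ is exactly the once-differentiated $(\text{P}_{\text{II}})$ you anticipated, integrating back with $\alpha$ as the integration constant. If anything, your version is slightly more careful than the paper's: it spells out items $(i)$--$(ii)$ of Definition~\ref{def:BG} explicitly, and it uses the orbit point $z_{93}=a_2V+q'+q^2+x/2$ consistently with~\eqref{eq:p93_blowup}, whereas the paper's proof writes $q'+q+x/2$ there (an apparent typo).
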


\begin{proof}
We consider a generalised blow-up at the point $p_{93}$ in~\eqref{eq:p93_blowup}, selecting the chart $(U,V)$
    \begin{equation}
    \begin{cases}
            y_{93} = a_1\, UV +q\,,  \\[2ex]
            z_{93} = a_2\,V + q' + q +\dfrac{x}{2}\,.
        \end{cases}
    \end{equation}
    With the further change of variables $(U,V)\mapsto (U^{-1},UV)$, we get the modified version of the system~XIII~\eqref{eq:syst_13}, which we recover by taking $a_1=a_2=1/2$ and  $q =-p$. 
    The condition for the system~\eqref{eq:syst_13_mod} to be regularised is 
    \begin{equation}\label{eq:condP2}
        q'''= 6q^2\, q' +q +  xq'    \,.  
    \end{equation}
    This can be easily integrated once after being written as a total derivative as 
    \begin{equation}
        \big( q''-2q^3-xq \big)' = 0  \quad \implies \quad q''-2q^3-xq = \alpha\,,
    \end{equation}
    hence $q(x)$ solves \hyperref[eq:P2]{$(\text{P}_{\text{II}})$}. 
   
\end{proof}

\section{\texorpdfstring{Analogues of Bureau-Guillot systems}{analogues}}\label{sec:analogueBG}
{
We present the detailed construction (and regularisation) of the Bureau-Guillot systems associated with the third Painlev\'e equation of type $D_6$, $D_7$ and $D_8$, and of the fourth Painlev\'e equation. For the cases of the fifth and sixth Painlev\'e equations, we describe the framework to obtain the corresponding Bureau-Guillot systems, which are reported in the Appendices~\ref{app:P5} and~\ref{app:P6} respectively. For all the cases, it was essential to follow the geometric approach, i.e.\ graphically building the rational surface from which we recognise the Dynkin diagram uniquely identifying the specific Painlev\'e equation (see e.g.~\cite{Dzhamay2109.06428}). 

\subsection{General construction}\label{sec:generalconstruction}
Based on the observations reported in the previous section for the known cases of $(\text{P}_{\text{I}})$ and $(\text{P}_{\text{II}})$, we outline the general method that we use in the following. Below, we consider the Hamiltonian systems associated with the Okamoto polynomial Hamiltonians $H_{\text{J}}(y,z;x)$ for the Painlev\'e equations~$(\text{P}_{\text{J}})$ with $\text{J}=\text{III-}6,\,\text{III-}7,\,\text{III-}8, \,\text{IV},\,\text{V}, \text{VI}$, referring to~\cite{okamoto1,okamoto2,okamoto3,Ohyama}. 

The system in the variables $\big(y,z\big)$ associated with a certain $(\text{P}_{\text{J}})$ allows us to determine the orbit~$\mathcal{O}_{\text{J}}$ defined as  
\begin{equation}
    \mathcal{O}_{\text{J}} = \{ (y,z) \in \mathbb{A}^2_{(y,z)}  \,|\,  z = h_\text{J}\big( y, y';x \big)  \}\,,
\end{equation}
where $y$ is solution of $(\textup{P}_\textup{J})$. We address the affine chart as $(y_{\text{J}},z_{\text{J}})$, and we consider the following (non-autonomous) system of ODEs related to the Painlev\'e equation~$(\text{P}_{\text{J}})$ 
\begin{equation}\label{eq:gen_syst}
    \begin{cases}
        y_\text{J}'= f_\text{J}\big(y_\text{J},z_\text{J};x\big) \,,\\[1ex]
        z_\text{J}'= g_\text{J}\big(y_\text{J},z_\text{J};x\big)  \,.
    \end{cases}
\end{equation}
In particular, we assume that $f_{\text{J}}$ in the first equation is linear in $z_\text{J}$, and specifically
$z_\text{J} = h_\text{J} \big( y_\text{J},y_\text{J}';x \big)$.
Substituting this expression for $z_\text{J}$ and its derivative $z_\text{J}'$ into the second equation of~\eqref{eq:gen_syst}, we have the second order equation in $y_\text{J} \equiv y$:  
\begin{equation}\label{eq:secord_gen}
    y'' = \frac{1}{(h_\text{J})_{y'}} \left(\, g_\text{J}\!\left( y,h_\text{J}\big(y,y';x\big);x \right)-(h_\text{J})_{y}\big(y,y';x\big) y' - (h_\text{J})_x(y,y';x) \,\right) , 
\end{equation}
with $(h_\text{J})_x=\partial h_\text{J} /\partial x$, $(h_\text{J})_y=\partial h_\text{J} /\partial y$, and $(h_\text{J})_{y'}=\partial h_\text{J} /\partial y'$.

We introduce $q(x)$ as a solution of $(\textup{P}_\textup{J})$, therefore a generic point $p_\text{J}$ on the orbit $\mathcal{O}_{\text{J}}$ in the chart~$(y_\text{J},z_\text{J})$ is given by 
\begin{equation}\label{eq:general_orbit}
    p_\text{J} \colon (y_\text{J},z_\text{J}) = \big( q \,,  h_{\text{J}}(q,q';x)\big) \,. 
\end{equation}
We construct systems which are Bureau-Guillot according to Definition~\ref{def:BG} by considering one of the following changes of variables $(y_\text{J},z_\text{J}) \mapsto (y,z)$, i.e.\ the birational transformations
\begin{equation}
\begin{array}{l l l}
\hspace{5ex} & {\label{eq:tran1}(a)~~ \begin{cases}
    y_\text{J}= y + q \,,    \\[1ex] 
    z_\text{J}= yz + h_{\text{J}}(q,q';x) \,,   
\end{cases} }  &  
(b)~~ \begin{cases}
    y_\text{J}= z + q \,,    \\[1ex]
    z_\text{J}= yz + h_{\text{J}}(q,q';x) \,,  
\end{cases}  \\[6ex]
& (c)~~\begin{cases}
    y_\text{J}= yz + q \,,    \\[1ex]
    z_\text{J}= y + h_{\text{J}}(q,q';x) \,,
\end{cases} & 
(d)~~\begin{cases}
    y_\text{J}= yz + q \,,    \\[1ex]
    z_\text{J} = z + h_{\text{J}}(q,q';x) \,.
\end{cases} 
\end{array}
\end{equation} 

\vspace{2ex}

\noindent
We call these changes of variables transformations of type I (or blow-up-like).
After applying any of the above transformations, the resulting system in $(y,z)$ is still related to the equation~$(\textup{P}_{\textup{J}})$, with $q(x)$ being solution to~$(\textup{P}_{\textup{J}})$ itself appearing as a regularising condition. It is easy to see that the composition of the two changes of coordinates used in the previous section corresponds to the transformation~\hyperref[eq:tran1]{$(b)$}. 

Applying different type I transformations results in systems with structurally distinct blow-up cascades. 
However, they 
 share similar regularising conditions, i.e.\ the function $q(x)$ is itself a solution of~$(\text{P}_{\text{J}})$. Throughout this section, we build the Bureau-Guillot systems adopting the transformation~\hyperref[eq:tran1]{$(b)$} to be consistent with the previous section.

\subsection{\texorpdfstring{Systems for $(\text{P}_{\text{III}},D_6)$, $(\text{P}_{\text{III}},D_7)$ and $(\text{P}_{\text{III}},D_8)$}{p3}}
The Painlev\'e III equation  in its general form is expressed as 
\begin{equation}\label{eq:P3_D6}
    (\text{P}_{\text{III}},D_6) \colon ~~ y''= \dfrac{(y')^2}{y}-\dfrac{y'}{x}+ \dfrac{\alpha\,y^2+\beta}{x} +\gamma\,y^3+\dfrac{\delta}{y} \,, \qquad y \in \mathbb{C} \,, ~~\gamma \delta \neq 0\,,
\end{equation}
with $\alpha, \beta, \gamma, \delta \in \mathbb{C}$ constant parameters. As specified in~\cite{Ohyama}, depending on the values of the parameters, we can identify two other standard forms of the third Painlev\'e equation, which for historical reasons are called ``of special type''. The three versions of the equation are related to a certain surface type, as usual corresponding to the configuration of vertical leaves to be removed from the compactified projective space. In particular, the equation~\eqref{eq:P3_D6} has a surface type represented by the extended Dynkin diagram~$D^{(1)}_6$, the special types have surface types~$D^{(1)}_7$ and~$D^{(1)}_8$ respectively. In terms of the parameters, this is 
\begin{align}
    (\text{P}_{\text{III}},D_7)&\colon \eqref{eq:P3_D6} \text{ with } \gamma= 0\,, \alpha \delta \neq 0 \text{ or with }  \delta=0\,, \beta \gamma \neq 0 \,,
    \label{eq:P3_D7}\\[1ex]
    (\text{P}_{\text{III}},D_8)&\colon \eqref{eq:P3_D6} \text{ with } \gamma= 0\,, \delta =0 \,, \alpha \beta \neq 0 \,.  
    \label{eq:P3_D8}
\end{align}

To construct the system with the polynomial right-hand side and with the  Painlev\'e transcendent in their coefficient functions, we start with  the Okamoto Hamiltonian for~{\hyperref[eq:P3_D6]{$(\text{P}_{\text{III}},D_6)$}} as in~\cite{okamoto4}
\begin{equation}\label{eq:H3ok}
H_{\text{III-}6}^{\text{Ok}}\big(y_3,z_3;x\big)= \dfrac{1}{x}\big(2\, y_3^2 \,z_3^2 +\eta_{\infty}  (\kappa_{0}+\kappa_{\infty}) x \,y_3-z_3 \left( 2 \,\eta_{\infty} x \,y_3^2  - 2\eta_{0}\, x+(2 \kappa_{0}+1) y_3\right) \big) \,,
\end{equation}
and the corresponding  Hamiltonian system  \begin{equation}
    \dfrac{dy_3}{dx}=\dfrac{\partial H^{\text{Ok}}_{\text{III}}}{\partial z_3},\qquad \dfrac{dz_3}{dx}=-\dfrac{\partial H^{\text{Ok}}_{\text{III}}}{\partial y_3},
\end{equation} 
which is equivalent to~\eqref{eq:P3_D6} 
with parameters $\eta_0, \eta_{\infty}, \kappa_0, \kappa_\infty$  related to $\alpha, \beta, \gamma, \delta$ as 
\begin{equation}
    \alpha = -4\eta_{\infty}\kappa_{\infty}\,, \quad \beta = 4\eta_0(\kappa_0+1)\,, \quad \gamma = 4\eta_{\infty}^2 \,, \quad \delta = -4\eta_0^2\,. 
\end{equation}
Of the two variables, $y_3$ satisfies the equation~\eqref{eq:P3_D6}, i.e.  
\begin{equation}\label{eq:P3_D6_eta_kappa}
    y_3'' = \frac{(y_3')^2}{y_3}-\frac{y_3'}{x}+\frac{4 \eta_{0} (\kappa_{0}+1)-4 \eta_{\infty} \kappa_{\infty} y_3^2}{x}+4 \eta_{\infty}^2 y_3^3-\frac{4 \eta_{0}^2}{y_3}\,. 
\end{equation}

We have the following 
\begin{proposition}\label{thm:P3D6} Applying the transformation \hyperref[eq:tran1]{$(b)$} of type I to the Hamiltonian system with Hamiltonian $H_{\textup{III-}6}^{\textup{Ok}}$~in \eqref{eq:H3ok} for the chart $(y_{3},z_{3})$ along the orbit $\mathcal{O}_{\text{III-6}}$, i.e.
\begin{equation}\label{eq:point3Ok}
    p_{\textup{III-}6}^{\textup{Ok}} \colon (y_3,\,z_3) = \left( q\,,\dfrac{x\,\eta_{\infty}}{2}+\dfrac{x(q'-2  \eta_{0})}{4 q^2}+\dfrac{2 \kappa_{0}+1}{4 q}\right)\,,
\end{equation}
gives rise to the system 
\begin{equation}\label{eq:syst_P3}
        \begin{cases}
        \begin{aligned} 
            y'&= \left( 3\,\dfrac{yz}{q^2}+ 4\,\dfrac{y}{q}+ \dfrac{1+2\kappa_0}{2q^3} \right) (2\eta_0-q') + x \left(\eta_{\infty}^2 - \dfrac{\eta_0^2}{q^4} \right) +\dfrac{x q'}{4 q^4} (4 \eta_0 - q') \\[1ex]
            &~~ - \!\left( 3\,\dfrac{yz}{x q} + 2\,\dfrac{y}{x} \right) (1+2\kappa_0) - \dfrac{(1+2\kappa_0)^2}{4x q^2} -\dfrac{4 y^2}{x}\left(3 z q+q^2+2 z^2\right), \\[2ex]\end{aligned}  \\
            z'= \dfrac{z}{x} \left( \left( \dfrac{z}{q}+1  \right)(1+2\kappa_0)+4y(z+q)^2 \right) - \left( \dfrac{z^2}{q^2}+ \dfrac{2z}{q} \right)(2\eta_0 - q') ,
        \end{cases}
    \end{equation}
    which is Bureau-Guillot in the sense of Definition~\ref{def:BG} for $(\textup{P}_{\textup{III}},D_6)$ if $q$ is in turn solution to~$(\textup{P}_{\textup{III}},D_6)$.
\end{proposition}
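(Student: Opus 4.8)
The plan is to mirror the two-step strategy of the preceding section: first produce the system \eqref{eq:syst_P3} by the explicit change of variables \hyperref[eq:tran1]{$(b)$}, and then regularise it to read off the constraint on $q$. I would begin by noting that the orbit function $h_{\text{III-}6}$ appearing in \eqref{eq:point3Ok} is obtained by solving the first Hamiltonian equation $y_3'=\partial H_{\text{III-}6}^{\text{Ok}}/\partial z_3$ for $z_3$ as a function of $(y_3,y_3')$; evaluated at $y_3=q$ this is exactly the second coordinate of $p_{\text{III-}6}^{\text{Ok}}$. Substituting $y_3=z+q$ and $z_3=yz+h_{\text{III-}6}(q,q';x)$ into the Hamiltonian system and applying the product and chain rules, the first equation gives $z'$ with no second derivative of $q$, while the second gives $y'$, in which $q''$ enters through $dh_{\text{III-}6}/dx$ and, together with the part of $-\partial H_{\text{III-}6}^{\text{Ok}}/\partial y_3$ surviving at $z=0$, contributes a spurious pole along $z=0$. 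Demanding that this pole cancel (equivalently, eliminating $q''$ through \eqref{eq:P3_D6_eta_kappa}, exactly as the $1/z$ term is removed in the $(\text{P}_{\text{I}})$ computation) turns the right-hand sides into polynomials in $(y,z)$ and reproduces \eqref{eq:syst_P3}. This is a long but entirely mechanical calculation.

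Conditions $(i)$ and $(ii)$ of Definition~\ref{def:BG} then follow with little effort. For $(i)$, the transformation \hyperref[eq:tran1]{$(b)$} is birational (a translation composed with a blow-up-type substitution) with coefficients in the differential field generated by the fixed meromorphic function $q$; since the Okamoto system with Hamiltonian \eqref{eq:H3ok} has rational coefficients in $x$ and possesses the Painlev\'e property, and since $q$ contributes only fixed singularities, the movable singularities of \eqref{eq:syst_P3} are the images of poles of $(y_3,z_3)$ and hence again poles, giving birational equivalence and the Painlev\'e property. For $(ii)$, every coefficient in \eqref{eq:syst_P3} is by inspection a rational function of $q$, $q'$, $x$ and the constants $\eta_0,\eta_\infty,\kappa_0$, hence meromorphic in $x$ and expressible through $\text{P}_{\text{III}}$ and its derivative.

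The substance of the argument is condition $(iii)$, the regularisation. I would compactify \eqref{eq:syst_P3} to $\mathbb{P}^2$ as in \eqref{eq:CP2}, locate the base points where the right-hand sides degenerate to $0/0$, and resolve them by the cascade of blow-ups of the introduction, recording the coordinates of each successive centre chart by chart. Because the surface type here is $D_6^{(1)}$, the cascades are considerably longer than in the $(\text{P}_{\text{I}})$ and $(\text{P}_{\text{II}})$ cases, and, just as the centres $(u_k^{(2)},v_k^{(2)})$ acquired dependence on $q,q',q''$ there, each descending blow-up introduces one further derivative of $q$, so that the final centre depends on $q$ and several of its derivatives. Carrying this dependence accurately through the many charts, without algebraic slips in the parameters $\eta_0,\eta_\infty,\kappa_0$, is the principal obstacle. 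At the final blow-up in the critical cascade, regularity on the last exceptional curve forces a differential relation on $q$ which, after the requisite integration(s), reduces precisely to \eqref{eq:P3_D6_eta_kappa}, i.e.\ to $(\text{P}_{\text{III}},D_6)$, paralleling the reduction of \eqref{eq:condP1} to $(\text{P}_{\text{I}})$ and of \eqref{eq:condP2} to $(\text{P}_{\text{II}})$. As a final consistency check I would confirm that, once $q$ solves \eqref{eq:P3_D6_eta_kappa}, the point $p_{\text{III-}6}^{\text{Ok}}$ lies on the genuine solution orbit $\mathcal{O}_{\text{III-6}}$, so that all regularising conditions reduce to identities.
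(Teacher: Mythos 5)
Your proposal is correct and follows essentially the same route as the paper: apply transformation \hyperref[eq:tran1]{$(b)$}, with the spurious $1/z$ pole (whose coefficient is proportional to $q''$ minus the right-hand side of \eqref{eq:P3_D6_eta_kappa}) cancelled by imposing $(\text{P}_{\text{III}},D_6)$ — which is indeed how \eqref{eq:syst_P3} arises — and then compactify to $\mathbb{P}^2$, resolve the base-point cascades, and obtain a regularising condition (the paper's \eqref{eq:condP3D6}) that is satisfied when $q$ solves $(\text{P}_{\text{III}},D_6)$. The only deviation is your promise to integrate the final condition back to \eqref{eq:P3_D6_eta_kappa}: the paper explicitly declines to do this given the complexity of \eqref{eq:condP3D6} (see the remark immediately following the proposition) and only verifies that $\text{P}_{\text{III}}$ transcendents satisfy it, which is all the proposition requires.
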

\begin{proof} 
We perform the change of variables inverting~\hyperref[eq:tran1]{$(b)$} from $(y_3,z_3)\mapsto (y,z)$ along the orbit $\mathcal{O}_{\text{III-}6}$, i.e.
\begin{equation}
\begin{cases}
    z = y_3-q \,, \\[2ex]
    y = \dfrac{z_3-\left(\dfrac{x\,\eta_{\infty}}{2}+\dfrac{x(q'-2  \eta_{0})}{4 q^2}+\dfrac{2 \kappa_{0}+1}{4 q}\right)}{y_3-q}\,. 
\end{cases}
\end{equation} 
The Hamiltonian system associated with~\eqref{eq:H3ok} is transformed into the system~\eqref{eq:syst_P3}. 

We now consider the extended phase space $\mathbb{P}^2$, and identify the base points and the following cascades of blow-ups. From the base point $b_1$ coinciding with the origin in the affine chart $(u_0,v_0)$ we find two cascades that we distinguish with $b_1^{(1)}$ and $b_1^{(2)}$ 
\begin{align}
\label{eq:P3D6_casc_b1}
    b_1^{(1)}&\colon(u_0\,,v_0) = (0\,,0) ~ \leftarrow ~ \big(u_1^{(1)},v_1^{(1)}\big) = (0\,,0)\,, \\[1ex]
    \label{eq:P3D6_casc_b2}
    \begin{split} 
    b_1^{(2)}&\colon(u_0\,,v_0) = (0\,,0) ~ \leftarrow ~ \big(u_1^{(2)},v_1^{(2)}\big) = (0\,,-q) ~ \leftarrow ~ \big(U_2^{(2)},V_2^{(2)}\big)  = \left( 0\,, 0 \right) ~ \leftarrow ~ \\[1ex]
    &~ \leftarrow ~ \big(\widehat{U}_3^{(2)},\widehat{V}_3^{(2)}\big) = \left(\dfrac{\eta_0\,x}{q}\,,0\right) ~ \leftarrow ~ \big(U_4^{(2)},V_4^{(2)}\big) = \left( \frac{ \eta_{0}\,x-(\kappa_{0}+1) q}{q^2}\,, 0\right),
    \end{split} 
\end{align}
with the intermediate change of variables $\big(U_3^{(2)}, V_3^{(2)} \big) \mapsto \big((\widehat{U}_3^{(2)})^{-1}, \widehat{V}_3^{(2)} \big)$, following~\cite{GF_AS1,GF_AS2}.
The resulting systems in the final charts, respectively in coordinates $(u_2^{(1)},v_2^{(1)})$ and $(u_5^{(2)},v_5^{(2)})$, are regular without imposing further conditions on the coefficient functions. From the further base point $b_2$ coinciding with the origin in the affine chart $(U_0, V_0)$, we obtain the following two cascades,  identified with $b_2^{(\pm)}$:
\begin{align}
\begin{aligned} 
    &b_2^{(\pm)} \colon (U_0\,,V_0) = (0\,,0) ~ \leftarrow ~ \big(U_1^{(\pm)},V_1^{(\pm)}\big) = (0\,,0) ~ \leftarrow ~ \\[1ex]
    &~ \leftarrow ~ \big(U_2^{(\pm)},V_2^{(\pm)}\big) = \left(\dfrac{x }{4q^2}\left(2(\eta_{0} \pm \,\eta_{\infty}q^2) -q'\right)-\dfrac{2 \kappa_{0}+1}{4 q}\,,0\right) \\[1ex]
    &~ \leftarrow ~ \big(U_3^{(\pm)},V_3^{(\pm)}\big) = \left(\dfrac{\kappa_0\mp \eta_{\infty}q\,x}{2}\pm \dfrac{x}{8 \eta_{\infty} }  \left(\dfrac{q''-4 \eta_{0} (\kappa_{0}+1)}{ q^2} +\dfrac{4 \eta_{0}^2-(q')^2}{ q^3}\right)\pm \dfrac{q'}{8\eta_{\infty}q^2}\,,0\right)
\end{aligned} 
\end{align}
and the systems in the final charts $(U_4^{(\pm)},V_4^{\pm})$  are regularised by the following condition: 
\begin{equation}\label{eq:condP3D6}
    x q''' q^2-4q' \left(3 x \eta_{0}^2-2q \eta_0 (\kappa_{0}+1)+x \eta_{i}^2 q^4\right)-3 (q')^2(q + x q')-2 q q''(2x q'-q ) +4 q( \eta_{0}^2 - \eta_{i}^2 q^4) = 0\,, 
\end{equation}
 satisfied by $q(x)$ being solution to {\hyperref[eq:P3_D6]{$(\text{P}_{\text{III}},D_6)$}} as in~\eqref{eq:P3_D6_eta_kappa}. 
\end{proof} 

\begin{remark}
    The conditions we found for the case of {\hyperref[eq:P1]{$(\text{P}_{\text{I}})$}} and {\hyperref[eq:P2]{$(\text{P}_{\text{II}})$}} (respectively in equations~\eqref{eq:condP1} and~\eqref{eq:condP2}) can be easily integrated, showing that they appear as first or second order derivatives of Painlev\'e equations. 
    In the present case and in the forthcoming cases, we only show that {\hyperref[eq:P3_D6]{$(\text{P}_{\text{III}},D_6)$}} solves the regularising condition~\eqref{eq:condP3D6}, without explicitly integrating it, given the complexity of the expressions. 
\end{remark}

We analyse both versions of $(P_{\text{III}},D_7)$ in~\eqref{eq:P3_D7}, starting from the case $\gamma=0$ and $\alpha\delta \neq 0$. The corresponding Hamiltonian is 
\begin{equation}\label{eq:H3D7}
    H^{\text{Ok1}}_{\text{III-}7}\big( y_3,z_3;x \big) = \frac{1}{x} \left(2 y_3^2 z_3^2+\kappa_{\infty} x y_3-z_3 ((2 \kappa_{0}+1) y_3-2 \eta_{0} x) \right), 
\end{equation}
where the parameters $\eta_0,\kappa_0, \kappa_{\infty}$ are related to $\alpha,\beta,\delta$ by \begin{equation}
    \alpha =-4 \kappa_{\infty},\qquad \beta =4 \eta_{0} (\kappa_{0}+1),\qquad \delta =-4 \eta_{0}^2\,. 
\end{equation}
The variable $y_3$ satisfies the equation $(P_{\text{III}},D_7)$ 
\begin{equation}
	y_3''= \frac{(y_3')^2}{y_3}-\frac{y_3'}{x}+\frac{4}{x}\big( \eta_{0}(\kappa_{0}+1)-\kappa_{\infty} y_3^2\,\big)-\frac{4 \eta_{0}^2}{y_3}\,.
\end{equation}
\begin{proposition}
Applying the transformation \hyperref[eq:tran1]{$(b)$} of type I to the Hamiltonian system with Hamiltonian $H_{\textup{III-}7}^{\textup{Ok1}}$~in \eqref{eq:H3D7} for the chart $(y_{3},z_{3})$ along the orbit $\mathcal{O}_{\text{III-7}}$, i.e.
\begin{equation}\label{eq:p3D7-1}
    p_{\textup{III-}7}^{\textup{Ok}1}\colon (y_3,z_3) = \left( q\,, \dfrac{x(q'-2  \eta_{0})}{4 q^2}+\dfrac{2 \kappa_{0}+1}{4 q} \right) \,,
\end{equation}
gives rise to the system 
\begin{equation}\label{eq:BG_P3_D7}
    \begin{cases}
	    \begin{aligned} 
        y' &= 2y\left(\frac{2\eta_{0}-q'}{q} -\frac{2 y }{x} \left(q^2+3qz+2z^2\right) \right)  -\left(\frac{2 \kappa_{0}+1}{2 \sqrt{x} q}-\frac{\sqrt{x} \left(2 \eta_{0}-q'\right)}{2 q^2}\right)^{\!\!2}\\[1ex]
        &~~+y \left(\frac{3 z}{q}+2\right) \left(\frac{2 \eta_{0}-q'}{q}-\frac{2 \kappa_{0}+1}{x}\right),     
        \end{aligned} \\[8ex]
        \begin{aligned}
            z' &= \left(\frac{z^2}{q}+z\right) \!\left(\frac{4 y q(q+z) }{x}+\frac{2 \kappa_{0}+1}{x}\right) -\left(\frac{z^2}{q}+2 z\right)\! \frac{2 \eta_{0}-q'}{q}\,,
            \end{aligned}
	\end{cases}
\end{equation} 
which is Bureau-Guillot in the sense of Definition~\ref{def:BG} for $(\textup{P}_{\textup{III}},D_7)$ if $q(x)$ is solution to $(\textup{P}_{\textup{III}},D_7)$ with $\gamma=0$. 
\end{proposition}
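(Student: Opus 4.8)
The plan is to reproduce, for the degenerate Hamiltonian \eqref{eq:H3D7}, the three-step argument carried out in Proposition~\ref{thm:P3D6} for $(\text{P}_{\text{III}}, D_6)$. First I would write out the inverse of the type-I map \hyperref[eq:tran1]{$(b)$} based at $p_{\text{III-}7}^{\text{Ok}1}$ from \eqref{eq:p3D7-1}, namely
\[
z = y_3 - q\,, \qquad y = \frac{z_3 - h_{\text{III-}7}(q,q';x)}{y_3 - q}\,, \qquad h_{\text{III-}7}(q,q';x) = \frac{x(q'-2\eta_0)}{4q^2} + \frac{2\kappa_0+1}{4q}\,,
\]
where $h_{\text{III-}7}$ is the $z_3$-coordinate of $p_{\text{III-}7}^{\text{Ok}1}$. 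Substituting this birational change of variables into the Hamiltonian system generated by \eqref{eq:H3D7} and simplifying should return exactly \eqref{eq:BG_P3_D7}. Since \hyperref[eq:tran1]{$(b)$} is an invertible composition of a translation with a blow-up-type map, this step alone secures property $(i)$ of Definition~\ref{def:BG}: the new system is birationally equivalent to the Okamoto system with coefficients rational in $x$ for $(\text{P}_{\text{III}}, D_7)$, and therefore inherits the Painlev\'e property. Reading off the coefficients of \eqref{eq:BG_P3_D7} confirms property $(ii)$, as they are rational in $x$ and meromorphic in $q$, $q'$.

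The bulk of the work is property $(iii)$, obtained via the Okamoto--Sakai regularisation. I would compactify the $(y,z)$-plane to $\mathbb{P}^2$, rewrite the system in the three affine charts, and locate the base points at which the right-hand side degenerates to $0/0$. Mirroring the $D_6$ computation I expect base points at the origins of the $(u_0,v_0)$ and $(U_0,V_0)$ charts, each opening into a cascade of blow-ups; because \eqref{eq:H3D7} is a degeneration of \eqref{eq:H3ok} (the cubic $\eta_\infty$ term is absent), the resulting configuration should realise the surface type $D_7^{(1)}$ rather than $D_6^{(1)}$, so the number, branching and length of the cascades will differ from \eqref{eq:P3D6_casc_b1}--\eqref{eq:P3D6_casc_b2}. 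At each stage I would substitute the appropriate chart coordinates, compute the next indeterminate point---whose coordinates will progressively involve $q$, $q'$, $q''$ and eventually $q'''$---and iterate, inserting an intermediate coordinate inversion of the kind used for $(\widehat{U}_3^{(2)}, \widehat{V}_3^{(2)})$ should one be needed to keep the point finite.

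The only genuinely delicate step---and the expected main obstacle---is the final regularising condition. Requiring the transformed vector field to be regular on the last exceptional curve of the deepest cascade produces a differential relation $F(q,q',q'',q''';x)=0$ of the same flavour as \eqref{eq:condP3D6}. I would then verify that $F$ vanishes identically once $q$ solves the $(\text{P}_{\text{III}}, D_7)$ equation for $y_3$ displayed above the proposition, i.e.\ with $\gamma=0$. In practice this means using that equation to eliminate $q''$ from $F$, differentiating it once to eliminate $q'''$ as well, and checking that the remaining rational expression in $q, q', x$ collapses to zero. Confirming this cancellation, rather than the lengthy but mechanical blow-up bookkeeping, is where the substance of the proof resides; once $F=0$ is established, property $(iii)$ follows and the system \eqref{eq:BG_P3_D7} is Bureau-Guillot in the sense of Definition~\ref{def:BG}.
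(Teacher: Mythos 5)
Your proposal follows the paper's proof essentially step for step: invert the type-I transformation $(b)$ based at $p_{\textup{III-}7}^{\textup{Ok}1}$ to obtain the system \eqref{eq:BG_P3_D7}, compactify to $\mathbb{P}^2$, resolve the base points by cascades of blow-ups (with intermediate coordinate inversions where needed), and verify that the resulting regularising condition is satisfied when $q$ solves $(\text{P}_{\text{III}},D_7)$ with $\gamma=0$, which together with birational equivalence and the form of the coefficients gives properties $(i)$--$(iii)$ of Definition~\ref{def:BG}. Only two predictive details differ from what the computation actually yields: the cascades from $b_1$ coincide exactly with the $D_6$ ones in \eqref{eq:P3D6_casc_b1}--\eqref{eq:P3D6_casc_b2} (it is the $b_2$ cascade that changes, collapsing to a single, longer chain instead of two branches), and the regularising condition involves $q^{(IV)}$, so the elimination must use the $(\text{P}_{\text{III}},D_7)$ equation differentiated twice rather than once.
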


\begin{proof}
The system in the coordinates $(y,z)$ is obtained by performing the transformation~\hyperref[eq:tran1]{$(b)$} of type I with~$(y_3,z_3)\mapsto (y,z)$ at the point $p_{\textup{III-}7}^{\textup{Ok}1}$ in~\eqref{eq:p3D7-1} of the orbit $\mathcal{O}_{\text{III-}7}$. 

    We then proceed with the compactification of the complex plane $(y,z)$ to $\mathbb{P}^2$, and analyse the Hamiltonian system associated with $H^{\text{Ok}1}_{\text{III-}7}$. From the base point $b_1$ two branches emerge, with the same chains of blow-ups reported in~\eqref{eq:P3D6_casc_b1} and~\eqref{eq:P3D6_casc_b2} for the case {\hyperref[eq:P3_D6]{$(\text{P}_{\text{III}},D_6)$}}. 
    From the point $b_2$, we have a single cascade with blow-ups
    \begin{align}
\begin{aligned} 
    &b_2^{(3)} \colon (U_0\,,V_0) = (0\,,0) ~ \leftarrow ~ \big(U_1^{(3)},V_1^{(3)}\big) = (0\,,0) ~ \leftarrow ~ \\[.2ex]
    &~ \leftarrow ~ \big(U_2^{(3)},V_2^{(3)}\big) = \left(\dfrac{x }{4q^2}\left(2\eta_{0} -q'\right)-\dfrac{2 \kappa_{0}+1}{4 q}\,,0\right)\\[.2ex]
    &~\leftarrow ~ \big(u_3^{(3)},v_3^{(3)}\big) = (0\,,0) ~\leftarrow ~  \big(U_4^{(3)},V_4^{(3)}\big) = \left( \frac{x^2}{2 q^3} (4\eta_{0}^2 - (q')^2) + \frac{x q''+q'-4 \eta_{0} (\kappa_{0}+1)}{8 q^2} \,, 0 \right)  \\[.2ex]
    &~ \leftarrow ~ \big(U_5^{(3)},V_5^{(3)}\big) =  \left( \frac{x}{8 q^2}\left(q \kappa_{0} \left(x q''+q'\right)-4 \eta_{0} (\kappa_{0}+1) \left(x q'+\kappa_{0} q\right)\right)+ \right. \\[.2ex]
    &\hspace{5ex}\left. + \frac{x^2}{16 q^4} \left(  3 x q'(4 \eta_{0}^2-(q')^2) +q\left(q' \left(4 x q''+3 q'\right)-4\left(4 x q'''+q q''+2 \eta_{0}^2\right)+2  \kappa_{0} \right) \right) \,,0 \right),
\end{aligned} 
\end{align}
where the upper index $(3)$ labelling the coordinates refers to the third cascade. 
The system in the coordinates $\big(U_6^{(3)},V_6^{(3)}\big)$ is regular if the following condition is satisfied 
\begin{equation*}
\begin{aligned} 
&\left(4 \eta_{0}^2-(q')^2\right) \Big(q \left(3 x^2 \left(4 x \eta_{0}^2 q''-(q')^2 \left(3 x q''+2 q'\right)\right)+4 \eta_{0} (\kappa_{0}+1) \left(6 x^2 (q')^2-x q \left(13 x q''+11 q'\right)\right)\right)\\
&~~~-\left(q^3 \left(x^3 q^{(IV)}+4 x^2 q'''+11 x q''+q'\right)\right)+9 x q^2 \left((q')^2+x q'' \left(x q''+2 q'\right)\right)\Big)\\
&+4 \eta_{0} (\kappa_{0}+1) q^2 \Big(q^2 \left(x^2 q^{(IV)}+4 x q'''+2 q''\right)-q \left(q' \left(2 x^2 q'''+5 q'\right)+2 x q'' \left(3 x q''+7 q'\right)\right)\\
&~~~+4 \eta_{0}^2 \left(4 x \left(2 x q''+q'\right)+q\right)\Big)+32 \eta_{0}^2 (\kappa_{0}+1)^2 q^2 \left(q \left(x q''+q'\right)-x (q')^2\right)+3 x^2 q' \left(q-x q'\right) \left(4 \eta_{0}^2-(q')^2\right)^2\\
&+2 q^2 q'' \left(-4 x^2 \eta_{0}^2 \left(x q''+4 q'\right)+x q \left(2 x^2 (q'')^2+3 x q' q''+24 \eta_{0}^2\right)-q^2 \left(q'-x q''\right)\right)\\
&-x^2 q^{(IV)} q^4 \left(x q''+q'\right)+x q''' q^3 \left(q \left(x^2 q'''-4 q'\right)+2 x \left(4 \eta_{0}^2-x q' q''\right)\right)-2 q^2 (q')^2 \left(4 x \eta_{0}^2-q q'\right)= 0\,,
\end{aligned} 
\end{equation*}
that is verified if $q(x)$ solves {\hyperref[eq:P3_D7]{$(\text{P}_{\text{III}},D_7)$}}.
\end{proof}

The second version of the Hamiltonian for {\hyperref[eq:P3_D7]{$(\text{P}_{\text{III}},D_7)$}} with $\delta=0, \alpha\beta \neq 0$ is 
\begin{equation}\label{eq:H3D72}
   H^{\text{Ok2}}_{\text{III-}7}\big( y,z;x \big)  = \frac{1}{x} \left( 2 y^2 z^2+\eta_{\infty} \kappa_{\infty} x y+\frac{\kappa_{0} x}{y}-z \left(2 \eta_{\infty} x y^2+y\right)\right)\,,
\end{equation}
with the parameters related by 
\begin{equation}
    \alpha = -4\kappa_{\infty}\eta_{\infty}\,, \qquad \beta = 4\kappa_0\,, \qquad \gamma = 4\eta_{\infty}^2\,.
\end{equation}
\begin{proposition}
Applying the transformation \hyperref[eq:tran1]{$(b)$} of type I to the Hamiltonian system with Hamiltonian $H_{\textup{III-}7}^{\textup{Ok2}}$~in \eqref{eq:H3D72} for the chart $(y_{3},z_{3})$ along the orbit $\mathcal{O}_{\text{III-7}}$, i.e.
\begin{equation}
    p_{\textup{III-}7}^{\textup{Ok}2}\colon (y,z) = \left( q\,,\dfrac{x\,\eta_{\infty}}{2}+\dfrac{x\,q'}{4 q^2}+\dfrac{1}{4 q}\right) \,,
\end{equation}
gives rise to the system 
\begin{equation}\label{eq:BG_P3_D72}
    \begin{cases}
	    \begin{aligned} 
        y' &= -\frac{1}{4 x (q+z)^2}\Bigg(\left(\frac{z^2}{q^2}+\frac{2 z}{q}+1\right) \left(\frac{x^2 (q')^2}{q^2}+\frac{2 x q'}{q}\right)+\frac{q'}{q}\left(\frac{3 z}{q}+10\right) \left(4 x y z^2 \right)\\[1ex]
        &~~~-4  \left(x^2 \eta_{\infty}^2(q+z)^2-2 y \left( q^2 \left(18 y z^2+1\right)-4 z^2 \left(y z^2+1\right)\right)+7 y z q \left(4 y z^2+1\right)\right)\\[1ex]
        &~~~+4 x y (4 q+11 z) q'-8 y^2 q^3 (2 q+10 z)+\frac{z}{q} \left(\frac{z}{q}+12 y z^2+2\right)+\frac{4 x \kappa_{0}}{q} \left(\frac{z}{q}+2\right)+1\Bigg),     
        \end{aligned} \\[8ex]
        \begin{aligned}
            z' &= \dfrac{q'}{q}\left(\dfrac{z^2}{q}+2 z\right)+\frac{4 y z}{x} (q+z)^2
            +\frac{z}{x} \left(\frac{z}{q}+1\right) \,,
            \end{aligned}
	\end{cases}
\end{equation} 
which is Bureau-Guillot in the sense of Definition~\ref{def:BG} for $(\textup{P}_{\textup{III}},D_7)$ if $q(x)$ solves $(\textup{P}_{\textup{III}},D_7)$ with the parameter~$\delta=0$. 
\end{proposition}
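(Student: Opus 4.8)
The plan is to follow exactly the template established in the proof of Proposition~\ref{thm:P3D6} for the $D_6$ case, since the structure is identical. First I would invert the type~\hyperref[eq:tran1]{$(b)$} transformation, writing $z = y_3 - q$ and $y = (z_3 - h_{\textup{III-}7}^{\textup{Ok}2}(q,q';x))/(y_3-q)$ with the orbit value $h = x\eta_\infty/2 + xq'/(4q^2) + 1/(4q)$ taken from $p_{\textup{III-}7}^{\textup{Ok}2}$, and substitute into the Hamiltonian system generated by $H_{\textup{III-}7}^{\textup{Ok}2}$ in~\eqref{eq:H3D72}. A direct (if tedious) computation then yields the explicit system~\eqref{eq:BG_P3_D72}; this verifies property~$(ii)$ of Definition~\ref{def:BG}, namely that the coefficients are meromorphic in $x$ and functions of $q$ and $q'$. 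Property~$(i)$, birational equivalence to the Okamoto system for $(\textup{P}_{\textup{III}},D_7)$ with $\delta=0$, is automatic because the type~$(b)$ map is birational and the starting system is the Okamoto Hamiltonian system; by the remark after Definition~\ref{def:BG} the Painlev\'e property follows.

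The substance of the proof is property~$(iii)$: running the Okamoto--Sakai regularisation on~\eqref{eq:BG_P3_D72} in $\mathbb{P}^2$ and showing that $(\textup{P}_{\textup{III}},D_7)$ with $\delta=0$ emerges as the final regularising condition. I would compactify via~\eqref{eq:CP2}, locate the base points as the origins of the two charts at infinity, and resolve each by the cascade of blow-ups~\eqref{eq:blowup}. Since the surface type is again $D_7^{(1)}$, I expect the base point $b_1$ (origin of $(u_0,v_0)$) to reproduce the two cascades~\eqref{eq:P3D6_casc_b1}--\eqref{eq:P3D6_casc_b2} verbatim, requiring no conditions on $q$; this parallels what happened for the first $D_7$ Hamiltonian $H_{\textup{III-}7}^{\textup{Ok}1}$. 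The work therefore concentrates on the single cascade descending from $b_2$ (origin of $(U_0,V_0)$), where I would blow up repeatedly, tracking the centres $(U_i,V_i)$ --- which will be rational expressions in $q,q',q''$ and successively higher derivatives --- inserting an intermediate inversion $(U_i,V_i)\mapsto((\widehat U_i)^{-1},\widehat V_i)$ wherever a base point escapes to the line at infinity, exactly as flagged in the $D_6$ proof following~\cite{GF_AS1,GF_AS2}.

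The main obstacle I anticipate is purely computational bookkeeping rather than conceptual: each blow-up produces increasingly heavy rational functions, and the decisive step is reading off the numerator whose vanishing regularises the system on the last exceptional divisor. I would collect that numerator, identify it as a polynomial differential expression $F(q,q',q'',q''',q^{(IV)};x)=0$, and then verify --- by substituting the $(\textup{P}_{\textup{III}},D_7)$ relation $q'' = (q')^2/q - q'/x + (4/x)(\eta_0(\kappa_0+1)-\kappa_\infty q^2)$ with $\delta=0$ (so the $-4\eta_0^2/q$ term is absent) and its $x$-derivatives --- that $F$ vanishes identically. As in the preceding remark for $(\textup{P}_{\textup{III}},D_6)$, I would not attempt to integrate $F$ explicitly but only confirm that $(\textup{P}_{\textup{III}},D_7)$ solves it, which completes the verification of~$(iii)$ and hence the proposition.
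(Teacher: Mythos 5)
Your overall strategy coincides with the paper's: invert the type~\hyperref[eq:tran1]{$(b)$} transformation along $p^{\textup{Ok}2}_{\textup{III-}7}$, compute the system~\eqref{eq:BG_P3_D72}, compactify to $\mathbb{P}^2$, resolve the base points at the origins of the two charts at infinity by cascades of blow-ups, and verify that the resulting regularising condition is satisfied by the appropriate Painlev\'e III transcendent. Two of your structural predictions are off: the paper finds that the second branch $b_1^{(2)}$ from the origin of $(u_0,v_0)$ is \emph{not} the $D_6$ cascade~\eqref{eq:P3D6_casc_b2} but a different, longer one (with centres such as $(0,2q^{10}/(x\kappa_0))$), and the cascade from $b_2$ \emph{splits into two branches} $b_2^{(\pm)}$ rather than being a single chain as it was for $H^{\textup{Ok}1}_{\textup{III-}7}$. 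These, however, are discovered rather than assumed in the course of the computation, so they would self-correct and do not by themselves invalidate the method.

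The genuine gap is in the final verification step. You propose to check the regularising condition by substituting $q'' = (q')^2/q - q'/x + (4/x)\big(\eta_0(\kappa_0+1)-\kappa_\infty q^2\big)$ ``with the $-4\eta_0^2/q$ term absent.'' That is the equation attached to the \emph{first} $D_7$ Hamiltonian $H^{\textup{Ok}1}_{\textup{III-}7}$ (the $\gamma=0$, $\alpha\delta\neq 0$ branch of~\eqref{eq:P3_D7}) with its last term deleted; since $\delta=-4\eta_0^2$ in that parametrisation, deleting it forces $\eta_0=0$, which also kills $\beta=4\eta_0(\kappa_0+1)$, leaving an equation with $\beta=\gamma=\delta=0$ that is not an admissible form of $(\text{P}_{\text{III}},D_7)$ at all (nor of $D_8$, which requires $\alpha\beta\neq 0$). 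The Hamiltonian in this proposition is $H^{\textup{Ok}2}_{\textup{III-}7}$ in~\eqref{eq:H3D72}, whose parameters are $\alpha=-4\kappa_\infty\eta_\infty$, $\beta=4\kappa_0$, $\gamma=4\eta_\infty^2$, $\delta=0$, so the relation to substitute is
\begin{equation*}
q'' = \frac{(q')^2}{q} - \frac{q'}{x} + \frac{4\kappa_0 - 4\eta_\infty\kappa_\infty\, q^2}{x} + 4\eta_\infty^2\, q^3\,,
\end{equation*}
with the cubic term present. This matters concretely: the regularising condition found by the paper,
\begin{equation*}
-x q''' q^2-2 q^2 q''+4 x \eta_{\infty}^2 q^4 q'-8 \kappa_{0} q q'+3 q (q')^2-3 x (q')^3+4 x q q' q''+4 \eta_{\infty}^2 q^5 = 0\,,
\end{equation*}
carries the Ok2 parameters $\eta_\infty$ and $\kappa_0$ and vanishes identically only under the Ok2 equation above; under your substitution it does not, so the check of property $(iii)$ of Definition~\ref{def:BG} would fail precisely at the decisive step. (A minor further point: the condition here is third order in $q$, not fourth order as you anticipated by analogy with the Ok1 case.)
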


\begin{proof}
The system is obtained implementing the transformation~\hyperref[eq:tran1]{$(b)$} with~$(y_3,z_3)\mapsto (y,z)$ at the point $p_{\textup{III-}7}^{\textup{Ok}2}$ in~\eqref{eq:p3D7-1}. 

From $\mathbb{P}^2$, we identify the base points of the system~\eqref{eq:BG_P3_D72}, 
coinciding with the origins of the affine charts $(u_0,v_0)$ and $(U_0,V_0)$ respectively. From the base point $b_1$ we obtain two branches: $b_1^{(1)}$ as in equation~\eqref{eq:P3D6_casc_b1},  and $b_1^{(2)}$, that is different from~\eqref{eq:P3D6_casc_b2} and given in the following
\begin{equation}
    \label{eq:P3D7-2_casc_b1}
    \begin{aligned} 
    &b_1^{(2)}\colon(u_0\,,v_0) = (0\,,0) ~ \leftarrow ~ \big(u_1^{(2)},v_1^{(2)}\big) = (0\,,-q) ~ \leftarrow ~ \big(u_2^{(2)},v_2^{(2)}\big)  = \left( 0\,, 0 \right) ~ \leftarrow ~ \\[1ex]
    &~~~ \leftarrow ~ \big(U_3^{(2)},V_3^{(2)}\big) = \left(0\,,0\right) ~ \leftarrow ~ \big(u_4^{(2)},v_4^{(2)}\big) = \left(0\,,\frac{2q^{10}}{x\,\kappa_0}\right) ~ \leftarrow ~ \big(u_5^{(2)},v_5^{(2)}\big) = \left(0\,,-\frac{2q^{14}}{x\,\kappa_0}\right).
    \end{aligned} 
\end{equation}
The system in the chart $\big(u_6^{(2)},v_6^{(2)}\big)$ is regular. 
The last cascade emerging from the origin $(U_0,V_0)$  splits in two branches, as 
\begin{align}
\begin{aligned} 
    &b_2^{(\pm)} \colon (U_0\,,V_0) = (0\,,0) ~ \leftarrow ~ \big(U_1^{(\pm)},V_1^{(\pm)}\big) = (0\,,0) ~ \leftarrow ~ \\[1ex]
    &~ \leftarrow ~ \big(U_2^{(\pm)},V_2^{(\pm)}\big) = \left(\pm \,\dfrac{x }{2}\eta_{\infty} -\frac{x\, q'}{4 q}-\dfrac{1}{4 q}\,,0\right) \\[1ex]
    &~ \leftarrow ~ \big(U_3^{(\pm)},V_3^{(\pm)}\big) = \left(\pm \frac{1}{8 \eta_{\infty} q^2}\left(\frac{x (q')^2}{q}-c q''(c)-q'+4\kappa_{0}\right)\pm\frac{1}{2} x \eta_{\infty} q\,,0\right).
\end{aligned} 
\end{align}
The condition for the systems in the final charts $(U_4^{(\pm)},V_4^{\pm})$ to be regularised is 
    \begin{equation}
    -x q''' q^2-2 q^2 q''+4 x \eta_{\infty}^2 q^4 q'-8 \kappa_{0} q q'+3 q (q')^2-3 x (q')^3+4 x q q' q''+4 \eta_{\infty}^2 q^5 = 0\,, 
\end{equation}
verified by $q(x)$ solving {\hyperref[eq:P3_D7]{$(\text{P}_{\text{III}},D_7)$}} with $\delta=0$.
\end{proof}

Lastly, we consider the equation {\hyperref[eq:P3_D8]{$(\text{P}_{\text{III}},D_8)$}} as in \cite{Ohyama}, for which the Hamiltonian is not polynomial and is given by 
\begin{equation}\label{eq:H3D8}
  H^{\text{Ok}}_{\text{III-}8}\big( y_3,z_3;x \big)   = \frac{1}{x} \left(y_3^2 z_3^2+y_3 z_3-\frac{1}{2}\left( \frac{1}{y_3}+\frac{y_3}{x} \right)\right),
\end{equation}
with the parameters $\alpha, \beta$ being 
\begin{equation}\label{eq:coeff_D8}
    \alpha = 1\,, \qquad \beta = -1\,. 
\end{equation}
The equation satisfied by $y_3$ is the following 
\begin{equation}\label{eq:P3D8_from_syst}
    y_3'' = \frac{(y_3')^2}{y_3} - \frac{y_3}{x} +\frac{y_3^2}{x^2} -\frac{1}{x}\,. 
\end{equation}
\begin{proposition}
Applying the transformation \hyperref[eq:tran1]{$(b)$} of type I to the Hamiltonian system with Hamiltonian $H_{\textup{III-}8}^{\textup{Ok}}$~in \eqref{eq:H3D8} for the chart $(y_{3},z_{3})$ along the orbit $\mathcal{O}_{\text{III-8}}$,
\begin{equation}
    p_{\textup{III-}8}^{\textup{Ok}}\colon (y_3,z_3) = \left( q\,,\dfrac{x\,q'}{2 q^2}-\dfrac{1}{2 q}\right),
\end{equation}
gives rise to the system   
\begin{equation}\label{eq:BG_P3_D8}
   \hspace*{-2.8ex} \begin{cases}
	    \begin{aligned} 
        y' &=\dfrac{1}{(z+q)^2} \left(\frac{q'}{q} \left(\frac{z}{q} \left(2-3 y z^2\right)+\frac{z^2}{q^2}-10 y z^2+1\right)-yq' (4 q+11 z) \right.  \\[2ex]
        &~~-\frac{x (q')^2}{q^2} \left(\frac{z^2}{2 q^2}+\frac{z}{q}+\frac{1}{2}\right)+\frac{y q}{x} \left(2 q \left(1-9 y z^2\right)-2 y q^2 (q+5 z)+7 z \left(1-2 y z^2\right)\right)\\[2ex]
        &~~\left. -\frac{1}{x}\left(\frac{1}{2}-y z^2 \left(\frac{3 z}{q}-4 y z^2+8\right)\right)+\left(1-\frac{z}{x}\right) \left(\frac{z}{2 q^2}+\frac{1}{q}\right) \right),\\[2ex]  
        \end{aligned} \\[10ex]
        \begin{aligned}
            z' &= \dfrac{q'}{q}\left(\dfrac{z^2}{q}+2 z\right)+\frac{4 y z}{x} (q+z)^2
            -\frac{z}{x} \left(\frac{z}{q}+1\right) \,,
            \end{aligned}
	\end{cases}
\end{equation} 
which is Bureau-Guillot in the sense of Definition~\ref{def:BG} for $(\textup{P}_{\textup{III}},D_8)$ if $q(x)$ in turn  solves~$(\textup{P}_{\textup{III}},D_8)$.
\end{proposition}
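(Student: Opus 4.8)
The plan is to mirror the regularisation carried out in the three preceding propositions. First I would realise the system~\eqref{eq:BG_P3_D8} explicitly by inverting the type~I transformation~\hyperref[eq:tran1]{$(b)$}, i.e.
\begin{equation*}
    z = y_3 - q\,, \qquad y = \frac{z_3 - \left(\dfrac{x\,q'}{2q^2}-\dfrac{1}{2q}\right)}{y_3-q}\,,
\end{equation*}
and substituting into the Hamiltonian system generated by $H_{\textup{III-}8}^{\textup{Ok}}$ in~\eqref{eq:H3D8}. Property~$(i)$ of Definition~\ref{def:BG} then holds automatically, since the map is birational and the source is the Okamoto system for $(\textup{P}_{\textup{III}},D_8)$; property~$(ii)$ is visible directly from~\eqref{eq:BG_P3_D8}, whose coefficients are rational in $q,q'$ and $x$. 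The content of the proof is therefore property~$(iii)$: showing that $(\textup{P}_{\textup{III}},D_8)$, as written in~\eqref{eq:P3D8_from_syst}, arises as the regularising condition.

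Next I would compactify $\mathbb{C}^2_{(y,z)}$ to $\mathbb{P}^2$ following~\eqref{eq:CP2} and locate the base points, which I expect to be the origins of the charts $(u_0,v_0)$ and $(U_0,V_0)$ as in every case above. A useful shortcut is available here: comparing the $z'$-equation of~\eqref{eq:BG_P3_D8} with that of~\eqref{eq:BG_P3_D72}, the two differ only in the sign of the final $\tfrac{z}{x}\!\left(\tfrac{z}{q}+1\right)$ term. This, together with the fact that $H_{\textup{III-}8}^{\textup{Ok}}$ carries a $-\tfrac{1}{2y_3}$ singularity exactly as $H_{\textup{III-}7}^{\textup{Ok2}}$ carries a $\tfrac{\kappa_0 x}{y}$ one, leads me to expect the cascade from $b_1$ to split into a short branch $b_1^{(1)}$ as in~\eqref{eq:P3D6_casc_b1} and a longer branch $b_1^{(2)}$ patterned on~\eqref{eq:P3D7-2_casc_b1}, with both final charts regular without further conditions. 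From $b_2$ the quadratic term $y_3^2z_3^2$ in the Hamiltonian should again produce a two-branch cascade $b_2^{(\pm)}$, and it is on the last exceptional divisors of these branches that the regularising condition must appear. The intermediate blow-up centres will be rational expressions in $q,q',q''$ and $x$, chosen order by order to cancel the indeterminacy.

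The step I expect to be hardest is carrying the $q$-dependence through the $b_2^{(\pm)}$ cascade and extracting the single differential expression in $q$ whose vanishing regularises the system. By analogy with~\eqref{eq:condP3D6} I expect a third-order ODE in $q$ that is equivalent, after rearrangement, to~\eqref{eq:P3D8_from_syst} (or a derivative thereof). The obstacle is bookkeeping rather than conceptual: the fixed parameter values in~\eqref{eq:coeff_D8} eliminate the $\eta_{\infty},\eta_0,\kappa_0$ appearing in $D_6$ and $D_7$ and so shorten the algebra, but the $1/y_3$ term makes the leading-order expansions in the blow-up less uniform, so one must check that the $+$ and $-$ branches yield the \emph{same} regularising condition. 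Finally, substituting $q$ (in place of $y_3$) solving~\eqref{eq:P3D8_from_syst} into that condition and confirming it vanishes identically completes the verification of~$(iii)$ and shows that~\eqref{eq:BG_P3_D8} is Bureau-Guillot in the sense of Definition~\ref{def:BG}.
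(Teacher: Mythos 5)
Your overall strategy coincides with the paper's proof: invert the type I transformation $(b)$ to produce the system~\eqref{eq:BG_P3_D8}, note that properties $(i)$ and $(ii)$ of Definition~\ref{def:BG} are immediate, then compactify to $\mathbb{P}^2$, resolve the two base points $b_1\colon(u_0,v_0)=(0,0)$ and $b_2\colon(U_0,V_0)=(0,0)$ by cascades of blow-ups, and extract the regularising condition, which is then checked to hold identically when $q$ solves~\eqref{eq:P3D8_from_syst}. Your picture of the $b_1$ side is also essentially right: a short branch $b_1^{(1)}$ identical to~\eqref{eq:P3D6_casc_b1} and a longer branch $b_1^{(2)}$ (six blow-ups, with an intermediate chart inversion), both ending in regular charts with no conditions imposed.

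The gap is in your prediction for $b_2$, and it matters because you built your ``hardest step'' around it. The two-branch cascades $b_2^{(\pm)}$ seen for $(\text{P}_{\text{III}},D_6)$ and for the second $D_7$ Hamiltonian~\eqref{eq:H3D72} are \emph{not} produced by the quadratic term $y_3^2z_3^2$ alone: the $\pm$ split comes from the two square roots $\pm\eta_\infty$ of $\gamma=4\eta_\infty^2$, as is visible in the blow-up centres $\big(\pm\tfrac{x\eta_\infty}{2}-\cdots,0\big)$ in those cases. For $(\text{P}_{\text{III}},D_8)$ the parameters are~\eqref{eq:coeff_D8} with $\gamma=\delta=0$, so no such branching occurs: the paper finds a \emph{single} cascade $b_2^{(3)}$ of six blow-ups emanating from $b_2$ (again with an intermediate chart inversion), exactly as in the first $D_7$ case~\eqref{eq:H3D7}, where $\gamma=0$ as well. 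Consequently the consistency check between the ``$+$'' and ``$-$'' branches that you singled out as the main difficulty simply does not arise; the regularising condition appears once, on the last exceptional divisor of the single cascade. Relatedly, your expectation of a third-order condition ``by analogy with~\eqref{eq:condP3D6}'' follows the wrong analogue: as in the $D_7$ case with $\gamma=0$, the condition here is fourth order in $q$ (it contains $q^{(IV)}$), and one verifies it is a differential consequence of the second-order equation~\eqref{eq:P3D8_from_syst}. None of this breaks the method --- carrying out the computation would self-correct the cascade geometry --- but the heuristic you used to predict it is faulty, and the step you isolated as critical is not the one that actually occurs.
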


\begin{proof}
We consider the transformation~\hyperref[eq:tran1]{$(b)$} with~$(y_3,z_3)\mapsto (y,z)$.  

The base points $b_1$, $b_2$ in $\mathbb{P}^2$ are in the origins of the affine charts $(u_0,v_0)$ and $(U_0,V_0)$ respectively. From $b_1$ we have two branches: $b_1^{(1)}$, same as in equation~\eqref{eq:P3D6_casc_b1}, and $b_1^{(2)}$ given in the following
\begin{equation}
    \begin{aligned}
      \begin{aligned} 
    &b_1^{(2)}\colon(u_0\,,v_0) = (0\,,0) ~ \leftarrow ~ \big(u_1^{(2)},v_1^{(2)}\big) = (0\,,-q) ~ \leftarrow ~ \big(U_2^{(2)},V_2^{(2)}\big)  = \left( 0\,, 0 \right) ~ \leftarrow ~ \\[1ex]
    &~~~ \leftarrow ~ \big(u_3^{(2)},v_3^{(2)}\big) = \left(0\,,0\right) ~ \leftarrow ~ \big(\widehat{U}_4^{(2)},\widehat{V}_4^{(2)}\big) = \left(\frac{x}{2q^2}\,, 0\right) ~ \leftarrow ~ \big({U}_5^{(2)},{V}_5^{(2)}\big) = \left( 0\,, 0 \right) , 
    \end{aligned}   
    \end{aligned}
\end{equation}
with the additional change of variables $\big(U_4^{(2)}, V_4^{(2)} \big) \mapsto \big((\widehat{U}_4^{(2)})^{-1}, \widehat{V}_4^{(2)} \big)$. The system in the final chart with coordinates $\big( U_6^{(2)} \,, V_6^{(2)} \big)$ is regular without generating any additional condition on the functions. 

From the base point $b_2$ a single cascade emerges, whose blow-ups are 
 \begin{align}
\begin{aligned} 
    &b_2^{(3)} \colon (U_0\,,V_0) = (0\,,0) ~ \leftarrow ~ \big(U_1^{(3)},V_1^{(3)}\big) = (0\,,0) ~ \leftarrow ~ \big(\widehat{u}_2^{(3)},\widehat{v}_2^{(3)}\big) = \left(0\,, \dfrac{2q^2}{q-xq'}\right)\\[.2ex]
    &~\leftarrow ~ \big(u_3^{(3)},v_3^{(3)}\big) = (0\,,0) ~\leftarrow ~  \big(U_4^{(3)},V_4^{(3)}\big) = \left( \frac{x}{2 q^2} \left(x q''+q'-\frac{x (q')^2}{q}+1\right) , 0 \right)  \\[.2ex]
    &~ \leftarrow ~ \big(U_5^{(3)},V_5^{(3)}\big) = \Bigg(  \frac{x}{q^2} \left(\frac{x q'}{q } \left(\frac{2 q'' }{q }+\frac{5 q'}{2}-\frac{3  x (q')^2}{q }+1\right)-\frac{1}{2} x q''' -2  x q''-q' -1\right),0 \Bigg), 
\end{aligned} 
\end{align}
where we used the intermediate change of variables $\big( u_2^{(2)}, v_2^{(2)} \big) \mapsto \big( (\widehat{u}_2^{(2)})^{-1}, \widehat{v}_2^{(2)} \big)$. 

The condition for the system in the chart $\big( U_6^{(2)}, V_6^{(2)} \big)$ to be regular is 
\begin{equation}
    \begin{aligned}
        &x^2 q^{(IV)} q^3 \left(q \left(q'+1\right)+ x \left(q q''-(q')^2\right)\right)+3  x \left(x^2 (q')^6-3  x q (q')^5+(3 q-2 x) q (q')^4\right)\\
        &-x^3 (q''')^2 q^4-2 q^3 \left(2 x^3 (q'')^3+q'\right) +2  x q''' q^3 \left(2 q \left(q'+1\right)+ x q' \left( x q''-2 q'-1\right)\right)
        \\
        &+q q''\left(2 q^2 \left((q-7 x) q'+q-x \right)+ x \left(q (13 x-11 q)-9  x q' \left( x q'-2 q\right)\right) (q')^2\right) \\
        &+ x q^2 (q'')^2 \left(3  x q' \left(3  x q'-2 q\right)-2 q (q+3 x)\right) +q^2 (q')^2\left((11 x-3 q) q'+(2 x-5 q)\right) =0\,,
    \end{aligned}
\end{equation}
satisfied if $q(x)$ solves {\hyperref[eq:P3_D8]{$(\text{P}_{\text{III}},D_8)$}} as in~\eqref{eq:P3D8_from_syst}. 
\end{proof}

Lastly, we consider the possibility of producing different systems using a generalisation of the transformation of~\hyperref[eq:tran1]{type I}, as we did in Section~\ref{sec:BGgen} for the case of the systems related to~$(\text{P}_{\text{I}})$ and~$(\text{P}_{\text{II}})$.

\begin{remark}
    It is possible to produce parametrised versions of the Bureau-Guillot systems obtained above. For instance, for the case {\hyperref[eq:P3_D6]{$(\text{P}_{\text{III}},D_6)$}}, we consider the parametrised transformation of type I of the chart in the variables $(y_3,z_3)$ for the Hamiltonian system with Hamiltonian~\eqref{eq:H3ok} along the orbit $\mathcal{O}_{\text{III-}6}$  in~\eqref{eq:point3Ok}, i.e. 
\begin{equation}
    \begin{cases}
        y_3= a_1\,z + q\,,      \\[1ex] 
        z_3 = a_2 \, yz + \dfrac{x\,\eta_{\infty}}{2}+\dfrac{x(q'-2  \eta_{0})}{4 q^2}+\dfrac{2 \kappa_{0}+1}{4 q}\,,
    \end{cases}
\end{equation}
with $a_1$, $a_2$ constants in $\mathbb{C}$. 
We get a new system also depending on the coefficients $a_1, a_2 \in \mathbb{C}$. The condition for the latter to be regularised is verified when $q(x)$ solves {\hyperref[eq:P3_D6]{$(\text{P}_{\text{III}},D_6)$}}.
\end{remark}
We find similar results for both versions of {\hyperref[eq:P3_D7]{$(\text{P}_{\text{III}},D_7)$}} and {\hyperref[eq:P3_D8]{$(\text{P}_{\text{III}},D_8)$}}, with Hamiltonians $H^{\text{Ok}1,2}_{\text{III-}7}$ in~\eqref{eq:H3D7}, \eqref{eq:H3D72} and $H^{\text{Ok}}_{\text{III-}8}$ in~\eqref{eq:H3D8} respectively. The same generalisation works for systems resulting from taking into account the rational Hamiltonians, i.e.\ $H^{\text{rat}}_{\text{III-}6}$ in~\eqref{eq:HamP3_rational} for systems related to {\hyperref[eq:P3_D6]{$(\text{P}_{\text{III}},D_6)$}} and the corresponding Hamiltonians for the two versions of~{\hyperref[eq:P3_D7]{$(\text{P}_{\text{III}},D_7)$}} (in~\eqref{eq:HamP3_rational} either $\gamma=0$ or $\delta=0$) and {\hyperref[eq:P3_D8]{$(\text{P}_{\text{III}},D_8)$}} (in~\eqref{eq:HamP3_rational} taking $\gamma=\delta=0$).  

\subsection{\texorpdfstring{Systems for $(\text{P}_{\text{IV}})$, $(\text{P}_{\text{V}})$ and $(\text{P}_{\text{VI}})$}{p4p5p6}}
The Painlev\'e IV equation has the form 
\begin{equation}\label{eq:P4}
    (\text{P}_{\text{IV}})\colon ~~ y''= \dfrac{(y')^2}{2\,y} + \dfrac{3}{2} \,y^3 + 4x\,y^2 +2(x^2-\alpha)y + \dfrac{\beta}{y}\,,\qquad   y \in \mathbb{C}\,,
\end{equation}
with  $\alpha, \beta, \gamma, \delta \in \mathbb{C}$ constant parameters. 
We consider the Okamoto Hamiltonians for $(\text{P}_{\text{IV}})$ as in~\cite{okamoto3,Ohyama_2006} in the variables $(y_4,z_4)$
\begin{equation}\label{eq:H4}
    H_{\text{IV}}^{\text{Ok}}(y_4\,,z_4\,;x)= 2\,y_4z_4^2 - \left(y_4^2 + 2x_4 y_4 + \kappa_0 \right)z_4 + \kappa_\infty y_4 \,,
\end{equation}
with the following relations between the parameters $\kappa_0, \kappa_\infty$ and $\alpha, \beta$:
\begin{equation}
    \alpha = 1 - \dfrac{\kappa_0}{2} +2\kappa_\infty \,, \qquad \beta = - \dfrac{\kappa_0^2}{2} \,. 
\end{equation}

\begin{proposition}\label{Prop3.5}
Applying the transformation \hyperref[eq:tran1]{$(b)$} of type I to the Hamiltonian system with Hamiltonian~$H_{\textup{IV}}^{\textup{Ok}}$~in \eqref{eq:H4} for the chart $(y_{4},z_{4})$ along the orbit $\mathcal{O}_{\text{IV}}$,
\begin{equation}\label{eq:point_H4}
    p_{\textup{IV}}^{\textup{Ok}}\colon (y_4,z_4) = \left( q\,,\frac{x }{2}+\frac{q}{4}+\frac{\kappa_0+q'}{q}\right),
\end{equation}
gives rise to a system that is Bureau-Guillot in the sense of Definition~\ref{def:BG} for $(\textup{P}_{\textup{IV}})$ if $q(x)$ is in turn the~$\textup{P}_{\textup{IV}}$ transcendent. 
\end{proposition}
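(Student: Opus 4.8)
The strategy follows the proofs of the $(\text{P}_{\text{III}})$ cases above. First I would invert the transformation \hyperref[eq:tran1]{$(b)$} along the orbit $\mathcal{O}_{\text{IV}}$ to pass from $(y_4,z_4)$ to $(y,z)$, namely
\begin{equation*}
z = y_4 - q, \qquad y = \frac{z_4 - h_{\text{IV}}(q,q';x)}{y_4 - q},
\end{equation*}
where $h_{\text{IV}}(q,q';x)$ is the orbit function obtained by solving the first Hamiltonian equation $y_4' = \partial H_{\text{IV}}^{\text{Ok}}/\partial z_4$ for $z_4$ and evaluating at $y_4=q$; this is the second coordinate of the point $p_{\text{IV}}^{\text{Ok}}$ in~\eqref{eq:point_H4}. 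Substituting into the Hamiltonian system attached to $H_{\text{IV}}^{\text{Ok}}$ in~\eqref{eq:H4} yields an explicit pair of rational ODEs in $(y,z)$ whose coefficients are rational in $q,q'$ and $x$.

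Properties $(i)$ and $(ii)$ of Definition~\ref{def:BG} are then immediate. The map \hyperref[eq:tran1]{$(b)$} is birational by construction, so the new system is birationally equivalent to the Okamoto system for $(\text{P}_{\text{IV}})$, whose coefficients are rational in $x$; and the transformed coefficients manifestly depend on the transcendent $q$ and its derivatives. The substantive claim is property $(iii)$, that $(\text{P}_{\text{IV}})$ arises as the regularising condition.

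To establish $(iii)$ I would compactify the $(y,z)$-plane to $\mathbb{P}^2$ using the charts in~\eqref{eq:CP2}, locate the base points where the right-hand sides degenerate to $0/0$, and run the blow-up cascade as in the preceding propositions. Since $(\text{P}_{\text{IV}})$ is associated with the surface type $E_6^{(1)}$, I expect base points at the origins of the $(u_0,v_0)$ and $(U_0,V_0)$ charts, as in the earlier cases, generating branching cascades whose successive centres accumulate $q,q',q''$ (and plausibly $q'''$), with intermediate inversions of the form $(U,V)\mapsto(U^{-1},V)$ inserted wherever a centre escapes to infinity, exactly as in the $D_6$ and $D_7$ computations. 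The system in the final chart of the critical branch will be regular only under a differential constraint $F(q,q',q'',q''';x)=0$, and I would verify by direct substitution that replacing the higher derivatives of $q$ by the expressions furnished by $(\text{P}_{\text{IV}})$ in~\eqref{eq:P4} and its differential consequences collapses $F$ to an identity.

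The main obstacle is the algebraic bulk of the cascade computation. Because $(\text{P}_{\text{IV}})$ carries both a $(y')^2/(2y)$ term and a pole at $y=0$, its singularity structure is richer than that of $(\text{P}_{\text{I}})$ or $(\text{P}_{\text{II}})$; correctly tracking where the cascade branches, inserting the right intermediate inversions, and simplifying the resulting condition into a form recognisable as $(\text{P}_{\text{IV}})$ (or a differential consequence of it) demands careful bookkeeping, most practically with computer algebra. Once $F$ is isolated, checking that the $\text{P}_{\text{IV}}$ transcendent solves it is routine, as in the remark following the $(\text{P}_{\text{III}},D_6)$ case, where the condition is verified rather than integrated.
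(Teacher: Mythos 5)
Your proposal follows the paper's proof exactly: the same inversion of transformation \hyperref[eq:tran1]{$(b)$} along $\mathcal{O}_{\text{IV}}$, compactification to $\mathbb{P}^2$, identification of base points at the origins of the $(u_0,v_0)$ and $(U_0,V_0)$ charts, blow-up cascades (the paper finds two from each base point, with the $b_1$ cascades regular and both $b_2$ cascades yielding the condition), and verification that the resulting third-order constraint $q'''q^2-2qq'q''+(q')^3-\kappa_0^2 q'-q^4(3q'+4)-4xq^3(q'+1)=0$ is satisfied by the $\text{P}_{\text{IV}}$ transcendent rather than integrated. The only cosmetic difference is that no intermediate inversions of the hatted type turn out to be needed in this case, which you correctly flagged as merely plausible rather than certain.
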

\begin{proof}
We consider the blow-up transformation~\hyperref[eq:tran1]{$(b)$} for the chart $(y_4,z_4)$ at the point~\eqref{eq:point_H4}, i.e.\ 
\begin{equation}
    \begin{cases}
        y_4 = z +  q\,, \\[2ex]
        z_4 = yz + \dfrac{x }{2}+\dfrac{q}{4}+\dfrac{\kappa_0+q'}{q},
    \end{cases}
\end{equation}
obtaining the system of first order ODEs in $(y,z)$: 
{
\begin{equation}
    \begin{cases}
        \begin{aligned} 
        y'=\dfrac{q}{2}\left(1+4 y-8 y^2\right)-\dfrac{(4 y-1) \left(\kappa_{0}+q'\right)}{2 q} +3 y z(1-2 y) +x
        \end{aligned} \,, \\[2ex]
        z'=\dfrac{z}{q} \big(\kappa_{0}+q'+q (q+z)(4 y-1) \big)\,. 
    \end{cases}
\end{equation}
Analysing the system in $\mathbb{P}^2$, we identify the base points $b_1\colon(u_0,v_0) = (0,0)$ and $b_2\colon(U_0,V_0) = (0,0)$, both giving origin to two cascades of blow-ups. In particular, we have the following: 
\begin{align}
\label{eq:P4_casc_b1}
    b_1^{(1)}&\colon(u_0\,,v_0) = (0\,,0) ~ \leftarrow ~ \big(u_1^{(1)},v_1^{(1)}\big) = (0\,,0)\,, \\[1ex]
    \label{eq:P4_casc_b12}
    \begin{split} 
    b_1^{(2)}&\colon(u_0\,,v_0) = (0\,,0) ~ \leftarrow ~ \big(u_1^{(2)},v_1^{(2)}\big) = (0\,,-q\,) ~ \leftarrow ~ \big(u_2^{(2)},v_2^{(2)}\big)  = \left( 0\,, -\dfrac{\kappa_0}{2q} \right), 
    \end{split} 
\end{align}
with regular final chart systems in $\big( u^{(1)}_2\,, v^{(1)}_2 \big)$ and $\big( u^{(2)}_3\,, v^{(2)}_3 \big)$ respectively. From the remaining base point in $\mathbb{P}^2$ the cascades are 
\begin{align}
\label{eq:P4_casc_b21}
\begin{split} 
    b_2^{(3)}&\colon(U_0\,,V_0) = (0\,,0) ~ \leftarrow ~ \big(U_1^{(3)},V_1^{(3)}\big) = (0\,,0)\, ~ \leftarrow ~ \big(U_2^{(3)},V_2^{(3)}\big)  = \left( -\frac{q^2+2 q x+\kappa_{0}+q'}{4 q}\,, 0 \right) ~ \leftarrow ~  \\[1ex]
    & ~~ \leftarrow ~ \big( U^{(3)}_3\,, V^{(3)}_3 \big) = \left(  -\dfrac{ (q-\kappa_{0})^2 -q^2\left(4 c^2+8 c q+3 q^2-3\right)+2 q q''-(q')^2}{8q}\,, 0 \right),
    \end{split} 
\end{align}
\begin{align}
    \label{eq:P4_casc_b22}
    \begin{split} 
    b_2^{(4)}&\colon(U_0\,,V_0) =  (0\,,0) ~ \leftarrow ~ \big(U_1^{(4)},V_1^{(4)}\big) = \left(\dfrac{1}{2}\,,0\right)\, ~ \leftarrow ~ \big(U_2^{(4)},V_2^{(4)}\big)  = \left( \frac{q^2+2 q x-(\kappa_{0}+q')}{4 q}\,, 0 \right) ~ \leftarrow ~  \\[1ex]
    & ~~ \leftarrow ~ \big( U^{(4)}_3\,, V^{(4)}_3 \big) = \left(  \dfrac{ (q+\kappa_{0})^2 -q^2\left(4 c^2+8 c q+3 q^2+5\right)+2 q q''-(q')^2}{8q}\,, 0 \right).
    \end{split} 
\end{align}
At the end of both cascades, the condition for the systems expressed in the variables $\big( U^{(3)}_4\,, V^{(3)}_4 \big)$ and $\big( U^{(4)}_4\,, V^{(4)}_4 \big)$ respectively, is expressed in terms of the coefficient function $q(x)$ and its derivatives as 
\begin{equation}
    q''' q^2-2 q q' q''+(q')^3 -\kappa_{0}^2 q'- q^4 (3q'+4)-4 x q^3( q'+1)= 0 \,,
\end{equation}
which is satisfied if $q(x)$ solves \hyperref[eq:P4]{$(\text{P}_{\text{IV}})$}. 

}

\end{proof}

The Painlev\'e V equation has the form 
\begin{equation}\label{eq:P5}
    (\text{P}_{\text{V}})\colon ~~ y''= \left(\frac{1}{2y}+\frac{1}{y-1}
\right)\left(y'\right)^{2}-\frac{y'}{x}+\frac{(y-1)^{2}}{x^{2}}\left(\alpha y+\frac{\beta}{y%
}\right)+\frac{\gamma y}{x}+\frac{\delta y(y+1)}{y-1}\,,~~ y \in \mathbb{C}\,,
\end{equation}
with $\alpha, \beta, \gamma, \delta \in \mathbb{C}$ constant parameters. 
We consider the Okamoto Hamiltonian for $(\text{P}_{\text{V}})$ as in~\cite{okamoto2,Ohyama_2006} in the variables $(y_5,z_5)$
\begin{equation}\label{eq:H5}
    H_{\text{V}}^{\text{Ok}}(y_5\,,z_5\,;x)= \frac{1}{x}\Big(y_5(y_5-1)^2 z_5^2-z_5 \big(\kappa_{0} (y_5-1)^2+\kappa_{t}\, y_5 (y_5-1)-\eta\,  x y_5\big)+\kappa  (y_5-1)\Big)\,,
\end{equation}
with the following relations between the parameters $\kappa_0, \kappa_t, \kappa, \eta$ and $\alpha, \beta, \gamma, \delta$:
\begin{equation}
\alpha=\frac{1}{2}(\kappa_0+\kappa_t)^2-2\kappa \,, \quad \beta=-\frac{1}{2}\kappa_{0}^2 \,, \quad  \gamma=-\eta(\kappa_t+1)    \,, \quad\delta=-\frac{1}{2}\eta^2 \,, 
\end{equation}
 with 
$$\kappa=\frac{(\kappa_0+\kappa_t)^2}{4}-\frac{\kappa_{\infty}^2}{4}, \quad \eta\neq 0\,.$$
\begin{proposition}\label{prop:P5}
Applying the transformation \hyperref[eq:tran1]{$(b)$} of type I to the Hamiltonian system with Hamiltonian~$H_{\textup{V}}^{\textup{Ok}}$~in \eqref{eq:H5} for the chart $(y_{5},z_{5})$ along the orbit $\mathcal{O}_{\text{V}}$,
\begin{equation}\label{eq:point_H5}
    p_{\textup{V}}^{\textup{Ok}}\colon \big(y_5,z_5\big) = \left( q\,,\frac{x q'-q (x \eta +2 \kappa_{0}+\kappa_{t})+q^2 (\kappa_{0}+\kappa_{t})+\kappa_{0}}{2 q(q-1)^2 }\right),
\end{equation}
gives rise to a system that is Bureau-Guillot in the sense of Definition~\ref{def:BG} for $(\textup{P}_{\textup{V}})$ if $q$ is in turn a~$\textup{P}_{\textup{V}}$ transcendent. 
\end{proposition}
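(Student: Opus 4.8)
The plan is to follow the same strategy as in Propositions~\ref{thm:P3D6} and~\ref{Prop3.5}, now with the cubic Okamoto Hamiltonian $H_{\textup{V}}^{\textup{Ok}}$ in~\eqref{eq:H5} and its four parameters $\kappa_0,\kappa_t,\kappa,\eta$. First I would invert the transformation~\hyperref[eq:tran1]{$(b)$} of type I along the orbit $\mathcal{O}_{\textup{V}}$, writing
\begin{equation*}
    z = y_5 - q, \qquad y = \frac{z_5 - h_{\text{V}}(q,q';x)}{y_5-q},
\end{equation*}
where $h_{\text{V}}(q,q';x)$ denotes the second coordinate of the orbit point $p_{\textup{V}}^{\textup{Ok}}$ in~\eqref{eq:point_H5}. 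Substituting $(y_5,z_5)$ into the Hamiltonian system attached to $H_{\textup{V}}^{\textup{Ok}}$ produces a first order system in $(y,z)$ with coefficients rational in $x$ and meromorphic through $q$, $q'$ and the parameters; this is the system recorded explicitly in Appendix~\ref{app:P5}. Being birationally equivalent to the Okamoto system by construction, property~$(i)$ of Definition~\ref{def:BG} holds automatically, so the work reduces to carrying out the regularisation and reading off the condition on $q$.

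Next I would compactify the $(y,z)$ plane to $\mathbb{P}^2$ as in~\eqref{eq:CP2} and locate the base points of the rational system in the three affine charts. As in the $(\textup{P}_{\textup{III}})$ and $(\textup{P}_{\textup{IV}})$ cases, I expect two base points $b_1$ and $b_2$ at the origins of the charts $(u_0,v_0)$ and $(U_0,V_0)$, each opening one or more cascades of blow-ups constructed with~\eqref{eq:blowup}, occasionally interleaved with auxiliary chart swaps $(U,V)\mapsto(U^{-1},V)$ of the kind used above. Because $H_{\textup{V}}^{\textup{Ok}}$ carries the factor $y_5(y_5-1)^2$ and one more parameter than the earlier Hamiltonians, I anticipate longer and branching cascades, reproducing the configuration of inaccessible divisors of surface type $D_5^{(1)}$ that characterises $(\textup{P}_{\textup{V}})$. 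Along the regular branches the final charts impose no constraint; all the relevant information is carried by the distinguished branch whose last exceptional divisor is regular only subject to a differential condition on $q$.

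Finally I would extract that regularising condition, which (as in every previous case) will be a polynomial differential equation $F(q,q',q'',q''',q^{(IV)};x)=0$, and verify that it vanishes identically once $q$ solves $(\textup{P}_{\textup{V}})$: substituting the expression for $q''$ given by~\eqref{eq:P5}, together with its differential consequences for $q'''$ and $q^{(IV)}$, into $F$ should reduce it to $0$. This last verification, together with the blow-up bookkeeping, is where I expect the main obstacle to lie. Unlike the $(\textup{P}_{\textup{I}})$ and $(\textup{P}_{\textup{II}})$ cases, the condition is not expected to be a clean total derivative, and the four parameters together with the denominators $q$ and $q-1$ coming from $h_{\text{V}}$ make the symbolic manipulation heavy, so the cascade computations and the final reduction are best organised and checked with computer algebra. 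Once $F=0$ is confirmed under $(\textup{P}_{\textup{V}})$, properties~$(ii)$ and~$(iii)$ of Definition~\ref{def:BG} are in place and the resulting system is Bureau-Guillot for $(\textup{P}_{\textup{V}})$.
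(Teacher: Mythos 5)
Your proposal follows essentially the same route as the paper's proof: apply transformation \hyperref[eq:tran1]{$(b)$} (equivalently, its inverse, as you write it) at the orbit point~\eqref{eq:point_H5}, derive the explicit system in $(y,z)$, compactify to $\mathbb{P}^2$, run the blow-up cascades from the two base points at the origins of $(u_0,v_0)$ and $(U_0,V_0)$, and verify via heavy symbolic computation that the resulting regularising condition holds identically when $q$ solves $(\text{P}_{\text{V}})$ --- exactly what the paper does, deferring the explicit system and condition to Appendix~\ref{app:P5}. The only detail your blind plan could not predict is the precise branching structure: in the paper both branches of the cascade from $b_1$ are regular with no conditions, while \emph{both} final charts of the cascade from $b_2$ produce the same regularising condition, rather than a single distinguished branch carrying it.
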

\begin{proof}
    We blow-up the chart $(y_5,z_5)$ at the point~\eqref{eq:point_H5} applying the transformation~\hyperref[eq:tran1]{$(b)$} in the coordinates~$(y_5,z_5)\mapsto (y,z)$
\begin{equation}\label{eq:transf_P5}
    \begin{cases}
        y_5 = z +  q\,, \\[2ex]
        z_5  = yz + \dfrac{x q'-q (x \eta +2 \kappa_{0}+\kappa_{t})+q^2 (\kappa_{0}+\kappa_{t})+\kappa_{0}}{2 q(q-1)^2 }\,. 
    \end{cases}
\end{equation}
To complete the proof, we determine the system in the variables $(y,z)$, and find that the condition for this to be regularised is that $q(x)$ solves $(\text{P}_{\text{V}})$. The system and the regularising condition are reported in Appendix~\ref{app:P5}.

\end{proof}

The Painlev\'e VI equation has the form 
\begin{equation}
\begin{aligned}
    (\text{P}_{\text{VI}})\colon ~~ y''&= \frac{1}{2}\left(\frac{1}{y}+\frac%
{1}{y-1}+\frac{1}{y-x}\right)\left(y'\right)^{2}-%
\left(\frac{1}{x}+\frac{1}{x-1}+\frac{1}{y-x}\right)y'\\[1ex]
&~~+\frac{y(y-1)(y-x)}{x^{2}(x-1)^{2}}\left(\alpha+\frac{\beta x}{y^{2}}+%
\frac{\gamma(x-1)}{(y-1)^{2}}+\frac{\delta x(x-1)}{(y-x)^{2}}\right),~~   y \in \mathbb{C}\,,
\end{aligned} 
\end{equation}
with $\alpha$, $\beta$, $\gamma$, $\delta \in \mathbb{C}$ constant parameters.
We consider the Okamoto Hamiltonians for $(\text{P}_{\text{VI}})$ as in~\cite{okamoto1,Ohyama_2006} in the variables $(y_6,z_6)$
\begin{equation}\label{eq:H6}
\begin{aligned} 
    H_{\text{VI}}^{\text{Ok}}(y_6\,,z_6\,;x)&= \frac{1}{x(x - 1)} \Big(y_6(y_6 - 1)(y_6 - x)z_6^2\\
    &~~-z_6\big(\kappa_0(y_6 - 1)(y_6 - x) + \kappa_1 y_6(y_6- x)+(\kappa_t-1)y_6(y_6-1)\big) + \kappa(y_6 - x)\Big)\,,
\end{aligned} 
\end{equation}
with the following relations between the parameters $\kappa_0,\kappa_1, \kappa_t, \kappa$ and $\alpha, \beta, \gamma, \delta$:
\begin{equation}
 \alpha=\frac{1}{2} (\kappa_t +\kappa_{0}+\kappa_{1}-1)^2-2\kappa \,, \quad \beta=-\frac{1}{2}\kappa_{0}^2 \,, \quad \gamma=\frac{1}{2}\kappa_1^2  \,, \quad\delta=\frac{1}{2}(1-\kappa_t^2) \, 
\end{equation}
with $$\kappa =\frac{1}{4} \left(\kappa _0+\kappa _1+\kappa _t-1\right){}^2-\frac{\kappa _{\infty }^2}{4}.$$
\begin{proposition}\label{prop:P6}
    Applying the transformation \hyperref[eq:tran1]{$(b)$} of type I to the Hamiltonian system with Hamiltonian $H_{\textup{VI}}^{\textup{Ok}}$~in \eqref{eq:H6} for the chart $(y_{6},z_{6})$ along the orbit $\mathcal{O}_{\text{VI}}$,
\begin{equation}\label{eq:point_H6}
  \!\!  p_{\textup{VI}}^{\textup{Ok}}\colon \big(y_6,z_6\big) = \left( q\,, \frac{q^2 (1-\kappa_{0}-\kappa_{1}-\kappa_{t})-q (1-\kappa_{0}(x+1)-\kappa_{t}-\kappa_{1} x)-x (q'+\kappa_{0})}{2q(q-1)(q-x)}\right),
\end{equation}
gives rise to a system that is Bureau-Guillot in the sense of Definition~\ref{def:BG} for $(\textup{P}_{\textup{VI}})$ if $q(x)$ is in turn a~$\textup{P}_{\textup{VI}}$ transcendent. 
\end{proposition}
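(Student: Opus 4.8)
The plan is to follow exactly the template established for the previous propositions, most closely the proofs of Proposition~\ref{Prop3.5} for $(\text{P}_{\text{IV}})$ and of Proposition~\ref{prop:P5} for $(\text{P}_{\text{V}})$, since transformation~\hyperref[eq:tran1]{$(b)$} is applied uniformly in every case. First I would invert the change of variables~\hyperref[eq:tran1]{$(b)$} along the orbit $\mathcal{O}_{\text{VI}}$, namely
\begin{equation*}
 z = y_6 - q, \qquad y = \frac{z_6 - h_{\text{VI}}(q,q';x)}{y_6 - q},
\end{equation*}
where $h_{\text{VI}}(q,q';x)$ is the second coordinate of $p_{\textup{VI}}^{\textup{Ok}}$ in~\eqref{eq:point_H6}, and substitute into the Hamiltonian system generated by $H_{\textup{VI}}^{\textup{Ok}}$ in~\eqref{eq:H6}. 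This produces the explicit first-order system in $(y,z)$ whose coefficients are rational functions of $q$, $q'$ and $x$; as for $(\text{P}_{\text{V}})$, I would record the resulting system in Appendix~\ref{app:P6} rather than inline.

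Properties $(i)$ and $(ii)$ of Definition~\ref{def:BG} then come essentially for free. Since transformation~\hyperref[eq:tran1]{$(b)$} is birational and the source is the Okamoto Hamiltonian system for $(\text{P}_{\text{VI}})$, which has coefficients rational in $x$, the resulting system is birationally equivalent to it; this gives $(i)$ and, through $(i)$, the Painlev\'e property. The coefficient functions produced by the substitution are manifestly rational in $y,z$ and meromorphic in $x$, built from $q$, $q'$ and the constants $\kappa_0,\kappa_1,\kappa_t$; they are functions of the $\text{P}_{\text{VI}}$ transcendent $q$ and its derivative, which is $(ii)$.

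The substance of the proof is property $(iii)$, the regularisation. I would compactify $\mathbb{C}^2_{(y,z)}$ to $\mathbb{P}^2$ as in~\eqref{eq:CP2}, write the system in the three affine charts, and locate the base points, expected at the origins of the charts $(u_0,v_0)$ and $(U_0,V_0)$ as in all previous cases. From each base point I would resolve the indeterminacy by the cascade of blow-ups of Section~\ref{sec:generalconstruction}, tracking the successively appearing exceptional curves (and inserting intermediate re-inversions of the type $(U,V)\mapsto(U^{-1},V)$ used in the $(\text{P}_{\text{III}})$ proofs) until the systems in the final charts are regular. For $(\text{P}_{\text{VI}})$ the expected surface type is $D_4^{(1)}$, reflecting the four simple leaves sitting over $y_6 = 0,1,x,\infty$; accordingly the cascades should split into branches over $y_6=0,1,x$ in a manner analogous to the $b_2^{(\pm)}$ splittings seen for $(\text{P}_{\text{III}},D_6)$. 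Imposing regularity on the last exceptional divisor in each cascade yields a single differential-polynomial constraint $F(q,q',q'',q''';x)=0$, which I would verify to be identically satisfied precisely when $q$ solves $(\text{P}_{\text{VI}})$, completing $(iii)$.

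The hard part will be the regularisation bookkeeping together with the final algebraic identity. Among the six cases, $(\text{P}_{\text{VI}})$ is the most involved: the three finite movable-pole leaves at $y_6=0,1,x$, in addition to the one at infinity, multiply the number of cascades and intermediate inversions, and the resulting regularising condition is a large differential polynomial in $q$ whose coefficients carry the full $x$-dependence entering through the singular point $y_6=x$. Because of this size, rather than integrate the condition explicitly I would only check, as was done for $(\text{P}_{\text{III}},D_6)$ with~\eqref{eq:condP3D6}, that substituting the $(\text{P}_{\text{VI}})$ equation for $q''$ (and its differential consequence for $q'''$) reduces $F$ to zero, relegating the explicit system and condition to Appendix~\ref{app:P6}.
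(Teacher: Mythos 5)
Your proposal takes essentially the same route as the paper's proof: apply transformation \hyperref[eq:tran1]{$(b)$} at the point $p_{\textup{VI}}^{\textup{Ok}}$ in~\eqref{eq:point_H6}, generate the first-order system in $(y,z)$, and defer the explicit system and the regularisation analysis --- base points at the origins of the charts $(u_0,v_0)$ and $(U_0,V_0)$, blow-up cascades, and a large differential-polynomial regularising condition verified by substituting $(\text{P}_{\text{VI}})$ for $q''$ rather than integrating --- to Appendix~\ref{app:P6}, exactly as the paper does. Your additional predictions (the $D_4^{(1)}$ surface type and the branch structure of the cascades over $y_6=0,1,x$) are plausible details the paper does not spell out, and they do not affect the correctness of the argument.
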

\begin{proof}
     From the chart $(y_6,z_6)$ we get 
\begin{equation}\label{eq:transf_P6}
    \begin{cases}
        y_6 = z +  q\,, \\[2ex]
        z_6  = yz + \dfrac{x q'-q (x \eta +2 \kappa_{0}+\kappa_{t})+q^2 (\kappa_{0}+\kappa_{t})+\kappa_{0}}{2 q(q-1)^2 }\,, 
    \end{cases}
\end{equation}
and we generate the system of first order ODEs in $(y,z)$. The details of the system and the regularising condition are reported in Appendix~\ref{app:P6}.
\end{proof}

We conclude this Section with the following observation. 

\begin{remark}
    We can construct the generalised versions of the Bureau-Guillot systems with Painlev\'e transcendents $\text{P}_{\text{IV}}$, $\text{P}_{\text{V}}$ and $\text{P}_{\text{VI}}$ in the coefficients that we list in the section, by performing the parametrised version of the transformation of type I. For instance, from the Hamiltonian $H^{\text{Ok}}_{\text{IV}}$ in \eqref{eq:H4} in the chart~$(y_4,z_5)$ along the orbit with points $p^{\text{Ok}}_{\text{IV}}$ in~\eqref{eq:point_H4},  we consider the change of variables 
    \begin{equation}
        \begin{cases}
            y_4 = a_1\, z + q \,,\\[1ex]
            z_4 = a_2\,yz + \dfrac{x \eta_{\infty}}{2}+\dfrac{x q'-2 x \eta_{0}}{4 q^2}+\dfrac{2 \kappa_{0}+1}{4 q}\,,
        \end{cases}
    \end{equation}
    with $a_1,a_2 \in \mathbb{C}$ two constants. The system we produce is the parametrised version of a Bureau-Guillot system in the sense of Definition~\ref{def:BG}. 
\end{remark}

\section{Generalisations and mixed cases}
\label{sec:gen_mixedcases} 

This Section extends the method described above both in terms of the initial system, and in terms of the transformation of variables. First, we generalise the type of the initial system, considering rational Hamiltonian systems and non-Hamiltonian cases. Second, we introduce the transformation of type II: a simpler, shift-like or linear  (rather than blow-up-like) transformation of the affine variables, enabling the construction of more general systems. This includes mixed cases and those whose regularising conditions yield quasi-Painlev\'e equations.  

\subsection{Examples with different initial systems}
The method we introduced to generate systems with Painlev\'e transcendents in the coefficients, is not constrained to the cases studied in the previous sections. Indeed, one can successfully produce analogous systems by following the proposed construction for the rational Hamiltonians for Painlev\'e equations. We give an example of this case, for {\hyperref[eq:P3_D6]{$(\text{P}_{\text{III}},D_6)$}}. 

\begin{example}
    We build a system with the  $\text{P}_{\text{III-}6}$ transcendent in the coefficients, starting from the rational Hamiltonian: 
\begin{equation}\label{eq:HamP3_rational}
H_{\text{III-}6}^{\text{rat}}\big(y_3,z_3;x\big)=\dfrac{1}{2}\dfrac{y_3^2\, z_3^2}{x} -\dfrac{\alpha \, y_3^2-\beta}{y_3}-\dfrac{\gamma}{2} \,  x_3 y_3^2+\dfrac{\delta}{2}  \,\dfrac{x}{y_3^2}\,, 
\end{equation} 
for which the system in $(y_3,z_3)$ is 
\begin{equation}
\begin{cases}
    y_3'=\dfrac{y_3^2 z_3}{x}, \\[1ex] 
    z_3'=\alpha +\dfrac{\beta }{y_3^2}+  \gamma\,x\,  y_3+\delta\,\dfrac{x}{y_3^3}-\dfrac{y_3 z_3^2}{x} .
\end{cases}  
\end{equation}
The variable $y_3$ satisfies {\hyperref[eq:P3_D6]{$(\text{P}_{\text{III}},D_6)$}}. With $q(x)$ being a  solution to {\hyperref[eq:P3_D6]{$(\text{P}_{\text{III}},D_6)$}} as well, we apply the transformation~\hyperref[eq:tran1]{$(b)$} of type I along the orbit $\mathcal{O}_{\text{III-}6}$,
\begin{equation}
    p_{\text{III}}^{\text{rat}}\colon (y_3, z_3) = \left( q\,, \dfrac{x\,q'}{q^2} \right),
\end{equation}
obtaining the following change of variables
\begin{equation}
    \begin{cases}
        y_3 = z + q\,,\\[1ex]
        z_3 = yz + \dfrac{x\,q'}{q^2}\,.
    \end{cases}
\end{equation}
The new system in $(y,z)$ can be regularised if $q(x)$ solves {\hyperref[eq:P3_D6]{$(\text{P}_{\text{III}},D_6)$}}. 
\end{example}

The Hamiltonian property of the system does not seem to be necessary for our construction. Instead, what is relevant is the possibility of 
deriving a second order differential equation for one of the variables, as reported in Section~\ref{sec:generalconstruction}. In the next Example we treat a system derived in~\cite{FilipukChen} in the context of Meixner orthogonal polynomials, non-Hamiltonian with respect to the canonical 2-form, and whose second order equation is related to $(\text{P}_{\text{V}})$ via a M\"obius transformation.   

\begin{example}
We consider the system~\cite{FilipukChen} in the variables $(y_5,z_5)$
\begin{equation}\label{eq:syst_meixner}
    \begin{cases}
        y_5'=\dfrac{1}{x^2}\left( (y_5+1)(n x+x^2y_5) +2 \gamma  y_5(x+z_5)+y_5(2z_5-x -\beta  x )\right),\\[2ex]
        z_5'=\dfrac{1}{x^2 (y_5+1)}\left(x^2(-\beta  +\gamma  -z_5(y_5^2 -2  y_5 - 1))+x z_5(\beta+1)+ (1-\gamma )(z_5^2+2xz_5)\right),
    \end{cases}
\end{equation}
with $\beta$, $\gamma \in \mathbb{C}$, and $n\in \mathbb{N}$. The orbit of a solution $q$ to the system is given by the point with coordinates 
\begin{equation}\label{eq:meix_point}
    p_{\text{Meix}}\colon (y_5, z_5) = \left( q\,,\frac{x \left(q (2 \gamma -\beta  +n+x-1)+n-x q'+x q^2\right)}{2 (\gamma -1) q}\right) \,. 
\end{equation}
Applying the transformation \hyperref[eq:tran1]{$(b)$} of type I to the system~\eqref{eq:syst_meixner} at the point~\eqref{eq:meix_point}, i.e.
\begin{equation}
    \begin{cases}
        y_5 = z + q \,, \\[1ex]
        z_5 = yz + \dfrac{x \left(q (2 \gamma -\beta  +n+x-1)+n-x q'+x q^2\right)}{2 (\gamma -1) q}\,,
    \end{cases}
\end{equation}
we obtain the following system in the variables $(y,z)$: 
\begin{equation}
    \begin{cases}
    \begin{aligned}
        y'&=\dfrac{1}{4 (\gamma -1) x^2 q^2 (q+1) (q+z+1)}\left(q^2 \left(x^2 \left((\beta -1)^2-n^2-10 n x-3 x^2+2 x (\beta -2 \gamma +1)\right) \right.\right. \\[1ex]
        &~~ \left. \left. +2 x^3 \left(z (\beta -2 (\gamma + n)-x+1)+x (z+3) q'\right)+4 (\gamma -1) x y \left(z (n-2 x z-3 x+1) +4 n\right.\right. \right.\\[1ex]
        &~~ \left. \left.\left. -x (z+3) q'+1\right)+4 (\gamma -1)^2 y^2 z (2 z+3)\right)+q^3 \left(2 x^3 \left(2 \beta -4 \gamma +z (\beta -2 (\gamma+x) -n+1)\right.\right. \right.\\[1ex]
        &~~ \left. \left.\left.-4 n+x q'-4 x+2\right)-4 (\gamma -1) x y \left(-2 (n-x+1)+x q'+z (2 x z+7 x-1)\right)\right.\right. \\[1ex]
        &~~  \left.\left.+4 (\gamma -1)^2 y^2 (z (2 z+7)+2)\right)-x^2 \left(n-x q'\right)^2-2 x q \left(n-x q'\right) (x (n+x z+2 x)\right. \\[1ex]
        &~~  \left.-2 (\gamma -1) y (z+2))-q^4 \left(\left(x^3 (-2 \beta +4 \gamma +2 n+2 x z+7 x-2)\right)-16 (\gamma -1)^2 y^2 (z+1)\right.\right. \\[1ex]
        &~~  \left.\left.+4 (\gamma -1) x y (4 x z+4 x-1)\right)-2 q^5 \left(x^4+4 (\gamma -1) y \left(x^2-\gamma  y+y\right)\right)\right),
        \end{aligned} \\
        \\
        z'=\dfrac{z \left(x^2 q'-n x+q (q+z) \left(x^2-2 (\gamma -1) y\right)\right)}{x^2 q}\,.
    \end{cases}
\end{equation}
The regularising condition at the end of the cascades is verified if $q(x)$ solves the second order equation for the system~\eqref{eq:syst_meixner}, i.e. 
\begin{equation}
\begin{split} 
    q'' &= \frac{1}{2 x^2 q (q+1)}\left(q^2 \left(\beta ( \beta-2(x+1)) -n(n-2 x)-2 x (q'-2\gamma)+x^2+1\right)+x^2 (q'^2+2  q^5) \right.  \\[1ex]
    &\qquad \left.  -n^2+x q^4 (4 \gamma-2 \beta  +2 n+5 x)+4 x q^3 (2 \gamma -\beta +n+x)-2 q \left(n^2-x^2 q'^2+x q'\right) \right). 
\end{split} 
\end{equation}

\end{example}

\subsection{Examples of mixed cases }\label{sec:mixedcases}
 In this section we address the question whether the construction proposed above can be replicated to produce mixed systems, i.e.\ systems   related to a certain Painlev\'e equation $(\text{P}_{\text{J}})$ with $\text{J}=\text{I}, \dots , \text{VI}$, whose coefficient function $q(x)$ solves a different  Painlev\'e equation~$(\text{P}_{\text{K}})$ with~$\text{K}\neq \text{J}$. 

As a first attempt, we apply the transformation of type I listed in~\eqref{eq:tran1} to the chart $(y_{\text{J}},z_{\text{J}})$ associated with~$(\text{P}_{\text{J}})$ along the orbit $\mathcal{O}_{\text{K}}$ for~$(\text{P}_{\text{K}})$. With this construction, starting from a polynomial system for~$(\text{P}_{\text{J}})$, we obtain a rational system whose regularisation is not ensured. In particular, we observe that one or more cascades of blow-ups are of finite length with regularising conditions such that $q(x)$ solves~$(\text{P}_{\text{K}})$. However, one or more cascades that cannot be regularised emerge, of potentially infinite length. Therefore, we propose an alternative change of variables.

We notice that the blow-up-like birational transformations reported in~\eqref{eq:tran1} to generate Bureau-Guillot systems in the sense of Definition~\ref{def:BG} are not the simplest possible transformations, these being instead shift-like transformations. In particular, for a system in the variables $(y_\text{J},z_\text{J})$ associated with~$(\textup{P}_\textup{J})$, and considering $p_{\textup{J}}\colon (q,h_\text{J}(q,q';x))$ a generic point of the orbit $\mathcal{O}_\text{J}$, these are the changes of variables $(y_\text{J},z_\text{J})\mapsto (y,z)$
\begin{equation}\label{eq:type2_1}
    (a)~~ \begin{cases}
        y_\text{J}= y + q \,, \\[1ex]
        z_\text{J}= z + h_\text{J}(q,q';x)\,, \\[1ex]
    \end{cases} \qquad (b)~~ \begin{cases}
        y_\text{J}= z + q \,, \\[1ex]
        z_\text{J}= y +h_\text{J}(q,q';x)\,. \\[1ex]
    \end{cases}
\end{equation}
The resulting system in $(y,z)$ is still related to $(\textup{P}_\textup{J})$ and one finds $q$ being in turn solution of $(\textup{P}_\textup{J})$ as a regularising condition. 

This type of transformations can be used to construct a working analogue of the Bureau-Guillot systems with the mixed case. In order to obtain it, we just transform the system associated with $(\textup{P}_\textup{J})$ using a change of variables built on the orbit $\mathcal{O}_\text{K}$ for $(\textup{P}_\textup{K})$ with $\textup{K} \neq \textup{J}$:  
\begin{equation}\label{eq:type2_2}
    (c)~~ \begin{cases}
        y_\text{J}= y + q \,, \\[1ex]
        z_\text{J}= z + h_\text{K}(q,q';x)\,, \\[1ex]
    \end{cases} \qquad (d)~~ \begin{cases}
        y_\text{J}= z + q \,, \\[1ex]
        z_\text{J}= y +h_\text{K}(q,q';x)\,. \\[1ex]
    \end{cases}
\end{equation}
  The system of ODEs in $(y,z)$ is associated with $(\textup{P}_\textup{J})$ with the coefficient function $q(x)$ solving $(\textup{P}_\textup{K})$ emerging as the regularising condition.  This simpler construction allows us to generate systems where the coefficient functions are more general, e.g.\ they can be solutions of quasi-Painlev\'e equations. We call the transformations in~\eqref{eq:type2_1} and~\eqref{eq:type2_2} of type II.

In the following we construct some examples for the mixed cases and one for the quasi-Painlev\'e extension.   

\begin{example}\label{ex:P1withP2}
    Let us consider the simplest system for $(\text{P}_{\text{I}})$ in the variables $(y_1,z_1)$ that we already analysed in Section~\ref{sec:BGgen} with the labelling of system V in equation~\eqref{eq:system_5}: 
\begin{equation}\label{eq:system_P1}
\begin{cases}
    y_1' = z_1,\\[1ex]
z_1' = 6\,y_1^2 + x.
\end{cases} 
\end{equation}
Let us now consider the simplest system for $(\text{P}_{\text{II}})$ in the variables $(y_2,z_2)$, identified as system IX.B(3) in equation~\eqref{eq:syst_93}
\begin{equation}\label{eq:syst_93_general}
\begin{cases} 
y_{2}' =  z_{2}-y_{2}^2  - \dfrac{x}{2}, \\[3ex]
z_{2}' = 2\,y_{2}\,z_{2} + \alpha+\dfrac{1}{2}.
\end{cases}
\end{equation}
We consider the transformation \hyperref[eq:type2_2]{$(d)$} of type II by including the point of the orbit for $(\text{P}_{\text{II}})$:
\begin{equation}
    p_{\text{II}} \colon (y_1,z_1) = \left( q\,, q^2 + q' +\frac{x}{2} \right), 
\end{equation}
with $q \equiv y_2$ in~\eqref{eq:syst_93_general} for $(y_1,z_1)\mapsto(y,z)$. The change of coordinates to consider is then 
\begin{equation}
    \begin{cases}
        y_1 = z + q, \\[2ex]
        z_1 = y + q^2 + q' +\dfrac{x}{2},
    \end{cases}
\end{equation}
generating the following system in the new variables $(y,z)$:
\begin{equation}\label{eq:system_ratio}
    \begin{cases}
    y'=q^2+z+\dfrac{x}{2},\\[1ex]
    z'= 6 y^2+12 q \, y- \alpha -2 q q'-2 q^3+6 q^2- x( q-1)-\dfrac{1}{2}.
    \end{cases}
\end{equation}
{In $\mathbb{P}^2$ we identify one base point, from which the following cascade originates:
\begin{align}
    \begin{split} 
    b_1&\colon(U_0,V_0)=(0\,,0) ~ \leftarrow ~ (u_1,v_1)=(0\,,0) ~ \leftarrow ~ (u_2,v_2)=(0\,,0) ~ \leftarrow ~ \\[1ex] 
    &~ \leftarrow ~ (u_3,v_3)=(0\,,4) ~ \leftarrow ~(u_4,v_4)=(0\,,0)~ \leftarrow ~ (u_5,v_5)=(0\,,48q) ~ \leftarrow ~ \\[1ex]
    &~ \leftarrow ~ (u_6,v_6)=(0\,,-16(2(q^2+q')+x)) ~ \leftarrow ~ \\[1ex]
    &~ \leftarrow ~ (u_7,v_7)=(0\,,-32(2q^3 -24q^2+xq-q''-x+\alpha)) ~ \leftarrow ~ \\[1ex]
    &~ \leftarrow ~ (u_8,v_8)=(-64 \left(q'''-6 q^2 q'+18 q q'-c q'+18 q^3+9 x q-q+1\right)\,,0).
    \end{split} 
\end{align}
In the final chart $(u_9,u_9)$ at the end of the cascade the resulting system is regular if the following condition is satisfied 
\begin{equation}
    q^{(IV)}-6 q^2 q''-x q''-12 q q'^2-2 q' = 0\,.
\end{equation}
This condition can be integrated twice in $x$, and by fixing one of the two integration constants to zero yields $(\text{P}_{\text{II}})$. 

}

\end{example}

\begin{example}
    Let us consider the system for $(\text{P}_{\text{IV}})$ given by the Okamoto Hamiltonian $H_{\text{IV}}^{\text{Ok}}$ in~\eqref{eq:H4} in the variables $(y_4,z_4)$
    \begin{equation}\label{eq:P4_system}
        \begin{cases}
        y_4'=4\, y_4 z_4-y_4^2 -2 x\, y_4 -\kappa_{0}\,,\\[1ex]
        z_4'=-2 z_4^2+2z_4 (x+y_4)-\kappa_{\infty}\,. 
        \end{cases}
    \end{equation}
    We rely on the system~\eqref{eq:system_P1} for $(\text{P}_{\text{I}})$ in Example~\ref{ex:P1withP2} in the variables $(y_1,z_1)$ to identify the orbit $\mathcal{O}_\text{I}$, i.e.\ for generic point 
    \begin{equation}
        p_{\text{I}}\colon (y_4,z_4) = \left( q\,, q'\right) \,, 
    \end{equation}
    with $q\equiv y_1$ in~\eqref{eq:system_P1}. We perform the transformation
    \begin{equation}
        \begin{cases}
            y_4 = y + q,\\[1ex]
            z_4 = z + q'.
        \end{cases}
    \end{equation}
    The system in the coordinates $\big(y,z\big)$ is 
    \begin{equation}
        \begin{cases}
            \begin{aligned}
               y' &= (q+y) \left(4 q'-q-2 x-y+4 z\right)-q'  -\kappa_{0},
            \end{aligned}\\[2ex]
            z' = 2 \left(q'+z\right) \left(-q'+q+x+y-z\right)-6 q^2-x-\kappa_{\infty} \,,     
            \end{cases}
    \end{equation}
and it is related to $(\text{P}_{\text{IV}})$ if the coefficient function $q(x)$ is the transcendent $\text{P}_{\text{I}}$. The latter arises as a regularising condition in two of the three cascades. In particular, considering the compactified space~$\mathbb{P}^2$ we identify the base points $b_1\colon (u_0,v_0)=(0,0)$, $b_2\colon (U_0,V_0)=(0,0)$ and $b_3\colon (U_0,V_0)=(2,0)$, the latter visible also in the chart $(u_0,v_0)$. The cascades of blow-ups are the following: 
\begin{align}
\begin{split} 
    b_1&\colon (u_0,v_0)=(0\,,0) ~ \leftarrow ~ (u_1^{(1)},v_1^{(1)})=(0\,,-q' ~ \leftarrow ~ (u_2^{(1)},v_2^{(1)})=(0\,,6q^2-q''+x+\kappa_{\infty}), 
 \end{split} \\[1ex]
    \begin{split}
    b_2&\colon (U_0,V_0)=(0\,,0) ~ \leftarrow ~ (U_1^{(2)},V_1^{(2)})=(-q\,,0) ~ \leftarrow ~ (U_2^{(2)},V_2^{(2)})=\left(\dfrac{\kappa_0}{2}\,,0 \right), 
    \end{split} 
    \\[1ex]
    \begin{split}
    b_3&\colon (U_0,V_0)=(2\,,0) ~ \leftarrow ~ (U_1^{(3)},V_1^{(3)})=(2q'-2q-2x\,,0) ~ \leftarrow ~ \\[1ex]
    &~ \leftarrow ~ (U_2^{(3)},V_2^{(3)})=\left(6q^2 -q'' -\dfrac{\kappa_0}{2}+\kappa_{\infty}+x-1\,,0 \right). 
    \end{split} 
\end{align}
The final system in the coordinates $(U_3^{(2)},V_3^{(2)})$ is regular, while for regularising the system in both final systems $(u_3^{(1)},v_3^{(1)})$ and $(u_3^{(3)},v_3^{(3)})$ the condition in terms of the coefficient function $q(x)$ is 
\begin{equation}
    q'''-12qq'-1 = 0\,,
\end{equation}
which can be easily integrated once in $x$, giving $(\text{P}_{\text{I}})$. 
    
\end{example}

\begin{example}\label{ex:p2p5}
    Let us consider the system for $(\text{P}_{\text{II}})$ as in~\eqref{eq:syst_93_general} in the variables $(y_2,z_2)$. 
    We identify the orbit $\mathcal{O}_\text{V}$ as determined by the solution to $(\text{P}_{\text{V}})$ associated with the Hamiltonian system with the Okamoto Hamiltonian~\eqref{eq:H5}, i.e.\ for a generic point 
    \begin{equation}
        p_{\text{V}}^{\text{Ok}}\colon (y_2,z_2) = \left( q\,, \frac{x q'-q (x \eta +2 \kappa_{0}+\kappa_{t})+q^2 (\kappa_{0}+\kappa_{t})+\kappa_{0}}{2 q(q-1)^2 }\right) \,.
    \end{equation}
    We implement the transformation \hyperref[eq:type2_2]{$(c)$} of type II in the coordinates $(y_2,z_2) \mapsto (y,z)$:
    \begin{equation}
        \begin{cases}
            y_2 = y + q,\\[1ex]
            z_2 = z + \dfrac{x q'-q (x \eta +2 \kappa_{0}+\kappa_{t})+q^2 (\kappa_{0}+\kappa_{t})+\kappa_{0}}{2 q(q-1)^2 }.
        \end{cases}
    \end{equation}
    The system in the coordinates $  (y,z)$ is 
    \begin{equation}
        \begin{cases}
            \begin{aligned}
               y' &= z- y(2 q+y)-q'+\frac{x q'+\kappa_{0}}{2 (q-1)^2 q}\\
               &~~-\frac{x (\eta+1-2q) +\kappa_{0}(2-q)+\kappa_{t}(1-q) + q^2(2q (q-2) +x+2)}{2 (q-1)^2},
            \end{aligned}\\[2ex]
            \begin{aligned}
           z' &= \dfrac{1}{4xq^2(q-1)^3}\left(\left(x q'+\kappa_{0}\right) \left(q \left(x(3  q'-4 x)+3 \kappa_{0}\right)-\left(x q'+\kappa_{0}\right)\right)\right) \\[1ex]
            &~+\dfrac{1}{4x(q-1)^3} \left[ (\kappa_{0}+\kappa_{t})^2 -4 \kappa -3 \kappa_{0}^2+x^2(\eta ^2 +4 \eta y) -2x\left(2 \alpha  + \eta  \kappa_{t} -2 \kappa_{0} \right. \right. \\[1ex]
            &~~~ \left. \left.  +2y\left(3 \kappa_{0}  + \kappa_{t}  -2 z \right) \right)
            -2 x q \left(12 \kappa +\kappa_{0}^2 -3(\kappa_{0}+ \kappa_{t})^2+2 x q' (\kappa_{0}+\kappa_{t}+2 x) \right. \right. \\[1ex]
            &~~~~ \left.\left.+\eta  x^2 (\eta -4(y-1) +2 x (6 \alpha -6 \kappa_{0}(y-1)+\kappa_{t}(2 -\eta  -4  y)+4z(3 y -1)+3)\right) \right. \\[1ex]
            &~~~ \left. +2 x q' (2 x (y-\eta -1))-3 \kappa_{0}-\kappa_{t}+2 x (-\eta +y-1))+q^2 \left(3 \left((\kappa_{0}+\kappa_{t})^2-4 \kappa \right)-4 \eta  x^2  \right. \right. \\[1ex]
            &~~~~ \left.\left. -2 x (6 \alpha +6 \kappa_{0}+4 \kappa_{t}-2 \kappa_{0} y-2 \kappa_{t} y+12 (y-1) z+3)\right)+q^3 \left(4 \kappa -(\kappa_{0}+\kappa_{t})^2\right. \right. \\[1ex]
            &~~~~ \left.\left. +x (4 \alpha +4 \kappa_{0}+4 \kappa_{t}+8 (y-3) z+2)\right)+8 x z q^4  \right]\,,     
            \end{aligned}
            \end{cases}
    \end{equation}
and it is related to $(\text{P}_{\text{II}})$ if the coefficient function $q$ is the transcendent $\text{P}_{\text{V}}$. The latter arises as a regularising condition in both cascades originating from the base points $b_1\colon (u_0, v_0) = (0,0)$ and $b_2\colon (U_0, V_0) = (0,0)$ respectively. The regularising condition is 
\begin{equation}
\begin{split} 
    &\hspace*{-2ex}2q'''\left(x^3 q^4-2 x^3 q^3+x^3 q^2\right) +2(q')^3\left(6 x^3 q^2-4 x^3 q+ x^3\right) -3(q')^2\left(3 x^2 q^3-4 x^2 q^2+ x^2 q\right) \\[1ex]
    &\hspace*{-2ex}+q^5 \left(x^2 \eta ^2-24 \kappa - \kappa_{0}^2+6(\kappa_{0}+ \kappa_{t})^2\right)-q^3 \left(x^2 \eta ^2+4 \kappa +6 \kappa_{0}^2-(\kappa_{0} +\kappa_{t})^2\right)\\[1ex]
    &\hspace*{-2ex}+q' \left(2q^4 \left(2 x^2 \eta  (\kappa_{t}+1)- x^3 \eta ^2- x \kappa_{0}^2\right)-2q^3 \left(2 x^3 \eta ^2+2 x^2 \eta  (\kappa_{t}+1) -4 x \kappa_{0}^2\right)-2 x \kappa_{0}^2(6  q^2+4 q +1)\right)\\[1ex]
    &\hspace*{-2ex}+4q''\left(x^2 (q^4-2 q^3+ q^2)-x^3 q'\left(3  q^3-4  q^2+ q\right) \right) +q^7 \left((\kappa_{0} +\kappa_{t})^2-4 \kappa \right)\\[1ex]
    &\hspace*{-2ex}+4(q^6+q^4) \left(4 \kappa - (\kappa_{0} +\kappa_{t})^2\right)+\kappa_{0}^2 (4 q^2- q) = 0\,,
    \end{split} 
\end{equation}
which is satisfied by $q(x)$ being a solution to $(\text{P}_{\text{V}})$.

\end{example}

We finally present a last example where we manage to extend our construction to include one case of the solutions of the quasi-Painlev\'e equations in the coefficient functions.  

{
\begin{example}\label{ex:quasiP1coeff} 
We start with the system associated with $(\text{P}_{\text{IV}})$ as in~\eqref{eq:P4_system} in the variables $(y_4,z_4)$. For the orbit to consider we refer to the system for \hyperref[eq:qsiP1_syst]{$(\text{qsi-P}_{\text{I}})$} already presented in~\eqref{eq:qsiP1_syst}. We apply the transformation \hyperref[eq:type2_2]{$(c)$} of type II for the chart~$(y_4,z_4)$ along the orbit for $(\text{qsi-P}_{\text{I}})$, i.e. 
\begin{equation}
    p_{\text{qsi-I}}\colon (y_4,z_4) = \left ( q\,, q' \right)\,.  
\end{equation}
We consider the change of variables 
 \begin{equation}
        \begin{cases}
            y_4 = y  + q,\\[1ex]
            z_4 = z  + q',
        \end{cases}
    \end{equation}
generating the system in the variables $\big( y,z \big)$
\begin{equation}
    \begin{cases}
        y'=-\kappa_{0}-(q+y) \left(q+2 x+y-4 z-4 q'\right)-q',\\[2ex]
        z'=2q'( y -2 z+ q -q'+ x) +2 z( q+x+y-z) -k_1 q^4 -x -\kappa_{\infty},
    \end{cases}
\end{equation}
which is related to \hyperref[eq:P4]{$(\text{P}_{\text{IV}})$} if the coefficient function $q$ satisfies the equation \hyperref[eq:qsiP1]{$(\text{qsi-P}_{\text{I}})$}. The condition appearing at the end of one of the cascades for the system to be regular is 
\begin{equation}
    q'''-4 k_1  q^3 q' -1 = 0\,,
\end{equation}
that can be integrated in $x$, giving \hyperref[eq:qsiP1]{$(\text{qsi-P}_{\text{I}})$}.

\end{example}

}

\section{Conclusions and new perspectives} 
With the present work we significantly extend the results of Bureau and Guillot in different directions.  We show that there exist many different systems with meromorphic coefficients, this being of pivotal importance for the problems of classification of systems with the Painlev\'e property. We produced novel systems with degree greater than two related to Painlev\'e equations $(\text{P}_{\text{J}})$ different from $(\text{P}_{\text{I}})$ and $(\text{P}_{\text{II}})$, with coefficient functions being in turn solutions to $(\text{P}_{\text{J}})$ (using \hyperref[eq:tran1]{type I} transformations). It is interesting to note that the same transformation works for all Painlev\'e equations to produce Bureau-Guillot systems. 

The fact that the original system undergoing transformations of type I is related to one of the Painlev\'e equations seems to be important. For instance, as mentioned in Remark~\ref{rmk:thomas}, if we start from a system related to a quasi-Painlev\'e equation, we do not obtain the same quasi-Painlev\'e equation as a regularising condition. It would be interesting to understand how to modify the construction for this sort of extensions. Furthermore, it would be interesting to see whether this construction can be applied to systems beyond the Painlev\'e and quasi-Painlev\'e equations. This would allow us to formulate a further generalisation of the Bureau-Guillot systems for which the coefficient functions satisfy different differential equations. Possibly, the study of the expansion of solutions to the systems may provide some hints for achieving this extension. Another interesting inverse problem currently under investigation is to show that the function $h_{\text{J}}$ for the general treatment (see~\eqref{eq:general_orbit}) in the transformations of type I for the Painlev\'e equation $(\text{P}_{\text{J}})$ needs to be of a specific form, this being the only choice to produce Bureau-Guillot systems. Indeed, for instance in Proposition~\ref{Prop3.5}, by choosing the point~\eqref{eq:point_H4} for $(\text{P}_{\text{IV}})$ we do not observe any base point in the affine chart $(y,z)$. Following this line, the second order equation \eqref{eq:secord_gen} can be shown to be the fourth Painlev\'e equation (with some additional assumption also involving the form of the function $g_{\text{J}}$). 

We also notice that if we start for instance from the system for $(\text{P}_{\text{I}})$ and apply any of the transformation of type I along the orbit of $(\text{P}_{\text{I}})$, i.e.\ $(q,q')$, but assuming that $q$ is a solution for $(\text{P}_{\text{II}})$, we still see $(\text{P}_{\text{II}})$ as a regularising condition in some of the cascades. However, in this case, one of the cascades cannot be resolved in a finite number of steps. This differs from the construction proposed in Section~\ref{sec:mixedcases}, but it may be still interesting to deepen these cases.  A similar scenario takes place if we consider the transformation of type I to e.g.\ $(\text{P}_{\text{II}})$, assuming $(\text{qsi-P}_{\text{I}})$ for the coefficient function~$q(x)$. One would obtain $(\text{qsi-P}_{\text{I}})$ as a regularising condition in one cascade only, at the cost of the second cascade being infinite. This discussion is interesting from the point of view of systems of ODEs and the possibility to  construct  spaces of initial conditions after birational transformations. To obtain polynomial systems with cascades of finite lengths in the regularising procedure, we need to rely on the linear transformations of type II. Note that this transformation preserves the Hamiltonian property for systems and this allows us to generate many Hamiltonians with meromorphic coefficients.

The Bureau-Guillot systems in the sense of Definition~\ref{def:BG} may resemble an analogue of the so-called self-duality for discrete systems with Painlev\'e property~\cite{206,213}. The discrete Painlev\'e equations and their analogue of Bureau-Guillot systems are currently under investigation and the results will be reported in a separate paper.

\section*{Acknowledgements}
We would like to thank A.\ Guillot for kindly providing the most recent version of \cite{Guillot} and helpful exchanges. We also are grateful to R.\ Willox for interesting discussions, and for drawing our attention to self-dual discrete systems.

\begin{appendices}

\section{Proof of Proposition~\ref{prop:P5} for \texorpdfstring{$\text{P}_{\text{V}}$}{p5}}\label{app:P5}
{
Here we elaborate on Proposition~\ref{prop:P5}. We transform the Okamoto Hamiltonian system with Hamiltonian~\eqref{eq:H5} in $(y_5, z_5)$ via the change of variables~\eqref{eq:transf_P5}, producing the system in $(y,z)$ 
\begin{equation}
    \begin{cases}
        \begin{aligned}
        y' &= \dfrac{1}{4xq^2(q-1)^4} \left[ -2 q^5( \kappa_{0}+ \kappa_{t})^2+ q^4\left(8 \kappa_{0}^2+8 x \eta ( \kappa_{0}+\kappa_{t})+4( \kappa_{0}+ \kappa_{t})^2\right)   \right. \\[1ex]
        &~~ \left. -2q^3\left( (6 \kappa_0 + \kappa_t)^2+ 3 ( x\eta (x \eta +4 \kappa_{0}  +2 \kappa_{t}) - 8 \kappa_0^2) \right) +4q^2\left(4 \kappa_{0}(2 \kappa_{0}+\kappa_{t})+x \eta (x \eta +6\kappa_{0}  +  \kappa_{t} ) \right) \right. \\[1ex]
        &~~ \left.-2\kappa_{0}q\left(9 \kappa_{0}+4 x \eta +2 \kappa_{t} \right) +4 \kappa_{0}^2+ 2x^2(q')^2\left(2-3 q\right)-20\, q^2 y^2 z^3 \left(q^4-4 q^3+6 q^2-4 q+1 \right) \right. \\[1ex]
        &~~ \left. + 8 q^3 y^2 \left(-q^6+6 q^5-15 q^4+20 q^3-15 q^2+6 q- 1 \right)\right. \\[1ex]
        &~~ \left. + q'\left(-8 xq^3 (\kappa_{0}+ \kappa_{t}) +12 x q^2 \left( \eta  x+2 \kappa_{0} + \kappa_{t} \right) +4 x q\left(-2\eta  x-6 \kappa_{0} - \kappa_{t} \right) +8 x \kappa_{0}\right)\right. \\[1ex]
        &~~ \left.-8q y \left(
        q^6( \kappa_{0}+ \kappa_{t}) -2q^6( x \eta +3 \kappa_{0}+2 \kappa_{t}) +3q^4(2 x \eta +5 \kappa_{0}+2 \kappa_{t})-2 q^3(3 x \eta +10 \kappa_{0}+2 \kappa_{t})\right.\right. \\[1ex]
        &~~ \left. \left. +q^2(2 x \eta +15 \kappa_{0}+ \kappa_{t}) +\kappa_{0}(6  q-1)
        +x  q'\left(3 q^4-10 q^3+12 q^2-6 q+1\right) \right)\right. \\[1ex]
        &~~ \left.+z^2 \left( 16 q^2 y^2\left(-3 q^5+14 q^4-26 q^3+24 q^2-11 q+2 \right)-16y\left(q^5( \kappa_{0}+ \kappa_{t}) -q^4( x \eta +4 \kappa_{0}+3 \kappa_{t})\right. \right. \right.\\[1ex]
        &~~ \left. \left. \left. +q^3(2 x \eta +6 \kappa_{0}+3 \kappa_{t}) -q^2(x \eta +4 \kappa_{0}+ \kappa_{t}) + \kappa_{0} q+xq'\left( q^3-2 q^2+ q\right) \right) \right) \right. \\[1ex]
        
        &~~  \left.  -3z \left( 
        q^4 (\kappa_{0}+ \kappa_{t})^2 -2q^3\left(2 \kappa_{0}^2+3 \kappa_{t} \kappa_{0}+2 \kappa_{t}^2 + x \eta  (  \kappa_{0}+\kappa_{t})\right) -2\kappa_{0} q\left(2 \kappa_{0}+ x \eta  + \kappa_{t} \right)+ (\kappa_{0}^2+ x^2 (q')^2)\right. \right.\\[1ex]
        &~~ \left. \left.  + q^2\left(x^2 \eta ^2+2 x\eta(2 \kappa_{0}  + \kappa_{t} ) + \kappa_{t}^2+6 \kappa_{0} (\kappa_{0}+\kappa_{t})\right)  
        +4 q^2 y^2 \left( 3q^6-16 q^5+35 q^4-40 q^3+25 q^2-8 q+1 \right)  \right. \right.\\[1ex]
        &~~  \left. -2 q'\left(x \left(-q^2(  \kappa_{0}+ \kappa_{t})+q\left( \eta  x+2 \kappa_{0} + \kappa_{t} \right) - \kappa_{0}\right) +4y \left(q^6( \kappa_{0}+\kappa_{t}) - q^5(3 x \eta +10 \kappa_{0}+7 \kappa_{t}) \right. \right. \right.\\[1ex]
        &~~ \left. \left. \left. + q^4(8 x \eta +20 \kappa_{0}+9 \kappa_{t}) -q^3(7 x \eta +20 \kappa_{0}+5 \kappa_{t}) + q^2(2 x \eta +10 \kappa_{0}+ \kappa_{t})-2 \kappa_{0} q\right. \right. \right.\\[1ex]
        &~~ \left. \left. \left.+ x q q'\left(3 q^3-8 q^2+7 q-2 \right) \right)\right)   \right],
        \end{aligned} \\
        \\[-1ex]
        \begin{aligned} 
        z' & = \dfrac{z}{x q(q-1)^2}\left[ 
        \kappa_{0}+z^2 \left( \kappa_{0}+x q'-q (2 \kappa_{0}+\kappa_{t}+\eta x )+q^2(\kappa_{0}+\kappa_{t}) +2 q y \left(3 q^3-8 q^2+7 q-2 \right)\right)\right.\\[1ex]
        &~~ \left.+z \left(-2 \kappa_{0}+(3 x q-2 x) q'-3q^2 (2 \kappa_{0}+ \kappa_{t}+ \eta  x)+q (6 \kappa_{0}+\kappa_{t}+2 \eta  x)+2q^3(\kappa_{0}+ \kappa_{t}) \right. \right.\\[1ex]
        &~~ \left. \left. 
        +2qy \left(3 q^4-10 q^3+12 q^2-6 q+ 1\right)\right)+xq'\left(3 q^2-4 q+1\right) -2q^3 (2 \kappa_{0}+ \kappa_{t}+ \eta  x)\right. \\[1ex]
        &~~  \left. +q^2 (6 \kappa_{0}+\kappa_{t}+2 \eta  x)+q^4(\kappa_{0}+\kappa_{t}) -4 \kappa_{0} q+y z^3 \left(2 q^3-4 q^2+2 q\right)\right. \\[1ex]
        &~~ \left. +y \left(2 q^6-8 q^5+12 q^4-8 q^3+2 q^2\right)
        \right].
        \end{aligned}
    \end{cases}
\end{equation}
Analysing the system in $(y,z)$ in the compactified space $\mathbb{P}^2$, we identify two base points $b_1\colon (u_0\,,v_0)=(0\,,0)$ and $b_2\colon (U_0\,,V_0)=(0\,,0)$. Both base points give rise to cascades splitting into two branches. The final charts of the cascade originated in $b_1$ appear to be already regularised, while at the end of both final charts from the cascade originated from $b_2$ we find the same condition for the system to be regular: 
\begin{equation}
    \begin{aligned}
    &q^4 \left(-2 \eta  (\kappa_{t}+1)+q' \left(\kappa_{0}^2+\eta  x (-2 \kappa_{t}+\eta  x-2)-1\right)-x \left(x q'''+3 q''\right)\right)\\[1ex]
        &+q^3 \left(6 x (q')^2+2 x \left(x q'''+3 q''\right)+2 q' \left(-2 \kappa_{0}^2+3 x^2 q''+\eta  x (\kappa_{t}+\eta  x+1)+1\right)+\eta  (\kappa_{t}+\eta  x+1)\right)\\[1ex]
        &-q^2 \left(6 x^2 (q')^3+8 x (q')^2+x \left(x q'''+3 q''\right)+q' \left(-6 \kappa_{0}^2+8 x^2 q''+1\right)\right)\\[1ex]
        &+\eta  q^5 (\kappa_{t}+\eta  (-x)+1) -x^2 (q')^3+\kappa_{0}^2 q'+2 q q' \left(x \left(x q''+2 x (q')^2+q'\right)-2 \kappa_{0}^2\right) =0\,.
    \end{aligned}
\end{equation}
The condition is identically satisfied if $q(x)$ is a $\text{P}_{\text{V}}$ transcendent.

}

\section{Proof of Proposition~\ref{prop:P6} for \texorpdfstring{$\text{P}_{\text{VI}}$}{p6}}\label{app:P6}
{
We report here the system obtained by applying the  transformation~\hyperref[eq:tran1]{$(b)$} of type I~\eqref{eq:transf_P6} to the chart~$\big( y_6, z_6 \big)$, obtaining the system in the variables $(y,z)$: 
\begin{equation}
    \begin{cases}
        \begin{aligned}
            y'&=\frac{1}{4 x (x-q)^2 (q-1)^2 q^2}\left[16 q y z^2 (x-q) (q-1)  \left(3  q^4y-4 q^3 y  (x+1) -20 y^2 (x-q)^2 (q-1)^2 q^2 z^3 \right. \right.\\[1ex]
            &~~ \left. \left.  +(q^2 x y (x+5) +y+\kappa_{0}+\kappa_{1}+\kappa_{t}-1) -q y(x (x+1) +\kappa_{0}(x+1)+x \kappa_{1}+\kappa_{t}-1) +x \left(\kappa_{0}+q'\right)\right) \right. \\[1ex]
            &~~  \left.  +3 \left(-12 y^2 q^8+32 q^7 y^2  (x+1) -4 y (7 y(x (x+3)+1) +2q^6(\kappa_{0}+\kappa_{1}+\kappa_{t}-1)) \right. \right. \\[1ex]
            &~~ \left. \left. +4 q^5 y (2 y (x+1) (x (x+8)+1) +5 (\kappa_{0}+ \kappa_{t})+x (5 (\kappa_{0}+ \kappa_{1}-1)+2 (\kappa_{1}+ \kappa_{t}-1)) \right. \right. \\[1ex]
            &~~ \left. \left.  -q^4\left((\kappa_{0}+\kappa_{1}+\kappa_{t}-1)^2+4 y \left(5 x y (x (x+3)+1) +4 (\kappa_{0}+\kappa_{t}-1)+x (4 (\kappa_{0}(x+3) + \kappa_{1}x) \right. \right. \right. \right. \\[1ex]
            &~~ \left. \left. \left. \left. +5 (\kappa_{1}+\kappa_{t}-1))+3 x q'\right)\right) +2 q^3\left((\kappa_{0}+\kappa_{1}+\kappa_{t}-1) (\kappa_{0}(x+1)+x \kappa_{1}+\kappa_{t}-1) \right. \right. \right.  \\[1ex]
            &~~ \left. \left. \left.  +2 y \left(\kappa_{0}+x ((x (x+9)+9) \kappa_{0}+x (x+4) \kappa_{1}+4 (\kappa_{t}-1))+\kappa_{t}+4 x (x+1) \left(x y+q'\right)-1\right)\right) \right. \right.   \\[1ex]
            &~~ \left. \left. -q^2\left((x (x+4)+1) \kappa_{0}^2+2 \left(\kappa_{1} x^2+2 (\kappa_{1}+\kappa_{t}-1) x+\kappa_{t}-1\right) \kappa_{0}+(x \kappa_{1}+\kappa_{t}-1)^2\right. \right. \right.  \\[1ex]
            &~~ \left. \left. \left.+4 x y z \Big(y x^2+\kappa_{1} x^2+2 (x (x+3)+1) \kappa_{0}+\kappa_{t}-1\right)+2 x (2 (x (x+5)+1) y+\kappa_{0}+\kappa_{1}+\kappa_{t}-1) q'\right) \right.   \\[1ex]
            &~~   \left.+2 q x (2 x (x+1) y+x \kappa_{0}+\kappa_{0}+x \kappa_{1}+\kappa_{t}-1) \left(\kappa_{0}+q'\right) -x^2 \left(\kappa_{0}+q'\right)^2\Big) \right.   \\[1ex]
            &~~   \left. +2 \left(-4  q^9 y^2+12 q^8 y^2 (x+1)  -4 q^7 y (3 (x (x+3)+1) y+\kappa_{0}+\kappa_{1}+\kappa_{t}-1) \right.\right.   \\[1ex]
            &~~   \left.\left.  +4 y (y(x+1) (x (x+8)+1) +3 (\kappa_{0}+ \kappa_{t})+\kappa_{1}+q^6 x (3 (\kappa_{0}+ \kappa_{1}-1)+\kappa_{t}-1)) \right.\right.   \\[1ex]
            &~~   \left.\left. -q^5\left((\kappa_{0}+\kappa_{1}+\kappa_{t}-1)^2+12 y \left(\kappa_{0}+\kappa_{t}+x ((x+3) \kappa_{0}+\kappa_{1}(x+1)+\kappa_{t}-1)\right.\right.\right. \right.   \\[1ex]
            &~~   \left.\left.\left.\left.+x \left(x (x+3) y+y+q'\right)-1\right)\right) +q^4\left((\kappa_{0}+\kappa_{1}+\kappa_{t}-1) (3 \kappa_{0}-\kappa_{1}+x (3 (\kappa_{0}+3 \kappa_{1}+\kappa_{t}-1)-\kappa_{t}+1)) \right.\right.\right.    \\[1ex]
            &~~   \left.\left.\left. +4 y \left(\kappa_{0}+x ((\kappa_{0}x (x+9)+9) +\kappa_{1}x (x+3) +3 (\kappa_{t}-1))+\kappa_{t}+x (x+1) \left(3 x y+5 q'\right)-1\right)\right) \right.\right.   \\[1ex]
            &~~   \left.\left. -q^3\left(3 x^2 (\kappa_{0}+\kappa_{1})^2+3 (\kappa_{0}+\kappa_{t}-1)^2+2 x \left(4 \kappa_{0}^2+3\kappa_{0} (\kappa_{1}+\kappa_{t}-1)  -(\kappa_{1}+1)(\kappa_{t}+\kappa_{1} - \kappa_{t})+\kappa_{t}-1\right)\right.\right.\right.   \\[1ex]
            &~~   \left.\left.\left.  +4 x \left(y \left(y x^2+\kappa_{1} x^2+3 (x (x+3)+1) \kappa_{0}+\kappa_{t}-1\right)+(2 (x (x+4)+1) y+\kappa_{0}+\kappa_{1}+\kappa_{t}-1) q'\right)\right)\right.\right.   \\[1ex]
            &~~   \left.\left. +q^2\left(12x^2 y (x+1)  \left(\kappa_{0}+q'\right) +(x \kappa_{0}+\kappa_{0}+x \kappa_{1}+\kappa_{t}-1) \left(\kappa_{0}+x (\kappa_{0}(x+6) + \kappa_{1}(x-1)-\kappa_{t}+1) \right.\right. \right.  \right.\\[1ex]
            &~~   \left.\left. \left. \left. +\kappa_{t}+6 x q'-1\right)\right) -x q \left(\kappa_{0}+q'\right) \left((x+2) (2 x+1) \kappa_{0}+2 \left(\kappa_{1} x^2+\kappa_{t}-1\right)+x \left(4 x y+3 q'\right)\right)\right.\right. \\[1ex]
            &~~   \left. \left. +x^2 (x+1) \left(\kappa_{0}+q'\right)^2\right)\right],        \end{aligned} \\
        \\ 
        \begin{aligned}
        z'&=\frac{1}{x (x-q) (q-1) q}\left[z \left(z \left(x (-3 q+x+1) q'-q \left(\kappa_{0}+\kappa_{t}+\kappa_{1} x^2+\kappa_{0} (x+4) x-1\right)\right.\right. \right.\\[1ex]
            &~~   \left. \left. \left. -2 q^3 (\kappa_{0}+\kappa_{1}+\kappa_{t}-1)+3 q^2 (\kappa_{0}+\kappa_{t}+\kappa_{0} x+\kappa_{1} x-1)-2 y q (q-1) (q-x) \left(3 q^2-2 (x+1) q+x\right) \right.\right. \right.\\[1ex]
            &~~   \left. \left. \left. +\kappa_{0} x (x+1)\right)-z^2 \left(x q'+q ( (q-1)(\kappa_{0}+\kappa_{t})+\kappa_{1} q+2 y (x-q) (q-1) (-3 q+x+1) \right.\right. \right.\\[1ex]
            &~~   \left. \left. \left. - x(\kappa_{0}+\kappa_{1})-q+1)+\kappa_{0} x\right)+x ((-3 q+2 x+2) q-x) q'+q \left(-q \left(\kappa_{0}(x^2+4x+1)+\kappa_{t}+\kappa_{1} x^2-1\right) \right.\right. \right.\\[1ex]
            &~~   \left. \left. \left.-\left(q^3 (\kappa_{0}+\kappa_{1}+\kappa_{t}-1)\right)+2 q^2 (\kappa_{0}(x+1)+\kappa_{t}+\kappa_{1} x-1)-2 y (x-q)^2 (q-1)^2 q+2 \kappa_{0} x (x+1)\right) \right.\right. \\[1ex]
            &~~    \left. \left. +2 y z^3 (x-q) (q-1) q-\kappa_{0} x^2\right)\right].
        \end{aligned}
    \end{cases}
\end{equation}
We identify the base points in $\mathbb{P}^2$, these being the origins of the two affine charts $b_1\colon (u_0,v_0)=(0,0)$ and $b_2\colon (U_0,V_0)=(0,0)$. The system here is associated with the equation $(\text{P}_{\text{VI}})$ with the regularising condition:
\begin{equation}
\begin{aligned}
    &2 x^4 q' \left(q'^2-\kappa_{0}^2\right)+x^3 q \left(\left(6 (x+1) q'+1\right) \left(\kappa_{0}^2-q'^2\right)-4 x q' q''\right)\\[1ex]
    &-q^5 \left(12 x q'^2+4 x \left(x (x+1) q^{(3)}+(2 x+3) q''\right)+2 q' \left(6 x^2 q''+x \left(-3 \kappa_{0}^2-3 \kappa_{1}^2+3 \kappa_{t}(\kappa_{t}-4) \right. \right. \right. \\[1ex]
    & \left.\left.\left. -3 x \left(\kappa_{0}^2-\kappa_{1}^2+\kappa_{t}^2-2 \kappa_{t}+1\right)+8\right)+2\right)-3 \left(\kappa_{0}^2+4 \kappa_{t}(1-2x)+x \left(\kappa_{0}^2 (x+3)-\kappa_{1}^2 (x+1)+5\right)  \right. \right.  \\[1ex]
    & \left.\left. +\kappa_{t}^2 (3 x-1)-3\right)\right)+q^4 \left(-\kappa_{0}^2-4 \kappa_{t}+12 x^2 q'^3+4 x (2 x+5) q'^2+2 x \left(x (x (x+4)+1) q^{(3)}\right. \right.  \\[1ex]
    & \left.\left.+(x (x+8)+3) q''\right)+2 q' \left(10 (x+1) x^2 q''+x \left(-3 \kappa_{0}^2+3 (\kappa_{t}-4) \kappa_{t}+3 x^2 \left(\kappa_{1}^2-\kappa_{0}^2\right)\right. \right.\right.  \\[1ex]
    & \left.\left.\left.-3 x \left(3 \kappa_{0}^2+\kappa_{1}^2+(\kappa_{t}-1)^2\right)+7\right)+1\right)+x \left(\kappa_{1}^2 x (x+3)-\kappa_{0}^2 (x (x+9)+9)-11\right)\right.   \\[1ex]
    & \left.+\kappa_{t}^2 (1-5 x)+16 \kappa_{t} x+3\right)+x q^3 \left(\kappa_{t}^2-4 \kappa_{t}-4 x^2 (x+1) q^{(3)}-4 x (x+2) q''-16 x (x+1) q'^3\right.   \\[1ex]
    & \left.-(x (2 x+11)+8) q'^2+2 q' \left(\kappa_{0}^2+4 \kappa_{t}-4 x (x (x+4)+1) q''+x \left(\kappa_{0}^2 (x (x+9)+9)-\kappa_{1}^2 (x-1) x+1\right)\right. \right.  \\[1ex]
    & \left.\left.+\kappa_{t}^2 (x-1)-2 \kappa_{t} x-2\right)-\kappa_{1}^2 x^2+3 \kappa_{0}^2 (x (x+3)+1)+3\right)+x^2 q^2 \left(6 (x (x+3)+1) q'^3+3 (x+1) q'^2\right.   \\[1ex]
    & \left.+2 x \left(x q^{(3)}+q''\right)+6 q' \left(2 x (x+1) q''-\kappa_{0}^2 (x (x+3)+1)\right)-3 \kappa_{0}^2 (x+1)\right)+q^6 \left(-3 \kappa_{0}^2+\kappa_{1}^2-12 \kappa_{t}\right. \\[1ex]
    & \left.-2 q' \left(\kappa_{1}^2+x \left(\kappa_{0}^2-\kappa_{1}^2+\kappa_{t}^2 (x-1)-2 \kappa_{t} (x-2)+x-3\right)-1\right)+2 x \left(x q^{(3)}+3 q''\right)\right.   \\[1ex]
    & \left.-3 x \left(\kappa_{0}^2-\kappa_{1}^2+3\right)+\kappa_{t}^2 (3-7 x)+16 \kappa_{t} x+9\right)+q^7 \left(\kappa_{0}^2-\kappa_{1}^2-\kappa_{t}^2+4 \kappa_{t}+2 (\kappa_{t}-1)^2 x-3\right)=0\,,
    \end{aligned} 
\end{equation}
which is satisfied if $q(x)$ is a $\text{P}_{\text{VI}}$ transcendent.

}

\end{appendices}

\section*{Conflict of interest}
The authors have no conflict of interest to declare that is relevant to the content of this article.

\section*{Data availability}
Data sharing is not applicable to this article as no new data were created or analysed in this study.

\bibliography{thebiblio}
\bibliographystyle{style_alpha}

\end{document}